\documentclass[%
 aip,
 jmp,
 amsmath,amssymb,
 reprint,
]{revtex4-1}

\usepackage{graphicx}
\usepackage{dcolumn}
\usepackage{bm}
 
\usepackage[utf8]{inputenc}
\usepackage[T1]{fontenc}
\usepackage{mathptmx}
\usepackage{etoolbox}
\usepackage{dsfont} 
\usepackage{amsthm}
\usepackage[english]{babel}

\usepackage{mathtools} 

\usepackage{slashed}

\usepackage{tikz-cd}
\usepackage{dsfont}
\usepackage{verbatim}

\RequirePackage{tikz-cd}
\RequirePackage{amssymb}
\usetikzlibrary{calc}
\usetikzlibrary{decorations.pathmorphing}

\tikzset{curve/.style={settings={#1},to path={(\tikztostart)
    .. controls ($(\tikztostart)!\pv{pos}!(\tikztotarget)!\pv{height}!270:(\tikztotarget)$)
    and ($(\tikztostart)!1-\pv{pos}!(\tikztotarget)!\pv{height}!270:(\tikztotarget)$)
    .. (\tikztotarget)\tikztonodes}},
    settings/.code={\tikzset{quiver/.cd,#1}
        \def\pv##1{\pgfkeysvalueof{/tikz/quiver/##1}}},
    quiver/.cd,pos/.initial=0.35,height/.initial=0}

\tikzset{tail reversed/.code={\pgfsetarrowsstart{tikzcd to}}}
\tikzset{2tail/.code={\pgfsetarrowsstart{Implies[reversed]}}}
\tikzset{2tail reversed/.code={\pgfsetarrowsstart{Implies}}}
\tikzset{no body/.style={/tikz/dash pattern=on 0 off 1mm}}

 \makeatletter 
 \def\@email#1#2{%
  \endgroup
  \patchcmd{\titleblock@produce}
   {\frontmatter@RRAPformat}
   {\frontmatter@RRAPformat{\produce@RRAP{*#1\href{mailto:#2}{#2}}}\frontmatter@RRAPformat}
   {}{}
 }%
 \makeatother

\newtheorem{theorem}{Theorem}
\newtheorem{definition}{Definition}
\newtheorem*{remark}{Remark} 
\newtheorem*{remarks}{Remarks}

\newcommand{\com}{\ensuremath{\mathds{C}}}
\newcommand{\real}{\ensuremath{\mathds{R}}}

\newcommand{\glc}{ \ensuremath{\mathrm{GL}(d_\gamma,\mathds{C} )}}
\newcommand{\slc}{ \ensuremath{\mathrm{SL}(d_\gamma,\mathds{C} )}}
\newcommand{\glr}{ \ensuremath{\mathrm{GL}(D,\mathds{R} )}}
\newcommand{\uone}{ \ensuremath{\mathrm{U}(1)}}

\newcommand{\so}{ \ensuremath{\mathrm{SO}}}
\newcommand{\spin}{ \ensuremath{\mathrm{Spin}}}
\newcommand{\pin}{ \ensuremath{\mathrm{Pin}}}
\newcommand{\cl}{ \ensuremath{\mathfrak{Cl}}}
\DeclareMathOperator{\Dim}{dim}

\bibliographystyle{apsrev4-1}

\begin{document}


\title{Local spin base invariance from a global differential-geometrical point of view }
\author{Claudio Emmrich}
\email[]{claudio.emmrich@uni-jena.de}
\affiliation{Theoretisch-Physikalisches Institut, Friedrich-Schiller-Universit\"at Jena, Max-Wien-Platz 1, 07743 Jena, Germany}

\date{\today}

\begin{abstract}
	This article  gives a geometric interpretation of the spin base formulation with  local spin base invariance of spinors on a curved space-time   and in particular of a central element, the global Dirac structure, in terms of principal and vector bundles and their endomorphisms. 
    It is  shown that this is intimately related to $\spin$ and $\spin^\com$ structures in the sense that the existence of one of those implies the existence of a Dirac structure and allows an extension to local spin base invariance. Vice versa, as a central result, the existence of a Dirac structure implies the existence of a $\spin^\com$ structure.  Nevertheless, the spin base invariant setting may be considered more general, allowing more physical degrees of freedom. Furthermore,  arguments are given that the Dirac structure is a more natural choice as a variable for (quantum) gravity than tetrads/vielbeins.
\end{abstract}  
	
	\pacs{02.40.-k, 04.20.Gz,  04.60.-m}

\maketitle 

\section{Introduction}
	 The starting point of local spin base invariance is  the elementary observation concerning spinors that  the relation defining the $\gamma$ matrices   and the corresponding Clifford algebra: $\{ \gamma_\mu, \gamma_\nu\} = -2 g_{\mu,\nu}  \openone$ is invariant under similarity transformations.  Local spin base invariance is the extension of this global symmetry to a local symmetry for spinors on arbitrary (curved) space-times: Since no derivatives are involved, the choice of similarity transformations may be  extended to transformations varying over space-time in a rather straightforward way. The natural variables for this formulation are space-time-dependent Dirac matrices subject to the Clifford-algebra constraint. No vielbeins are needed as opposed to the standard approaches to spinors  on curved space-times   used in most physics literature.

	 This idea has a rather long history, going back to Schrödinger \cite{schrodingerDiracSchesElektron1932} and Bargmann \cite{bargmannBemerkungenZurAllgemeinrelativistischen1932} in 1932.  The formalism has been applied later to quantization of fermions in a curved background metric in Refs.~\onlinecite{brillInteractionNeutrinosGravitational1957a,unruhSecondQuantizationKerr1974,casalsKermionsQuantizationFermions2013}. Though most literature on spinors on curved space-times focuses on the vielbein approach, the theory has been further developed in the last decades, see e.g. Refs.~\onlinecite{finsterLocalSymmetryRelativistic1998,weldonFermionsVierbeinsCurved2000}. An overall review of the formalism, including further literature, and also including spin torsion may be  found  in    Ref.~\onlinecite{giesFermionsGravityLocal2014}. In particular, arguments why the Dirac matrices might be considered more adequate variables for a quantization of gravity than vielbeins, both from a conceptual and a pragmatic (simplicity) point of view  may be found in
	 Refs~\onlinecite{giesFermionsGravityLocal2014,giesGlobalSurplusesSpinbase2015,lippoldtFermionischeSystemeAuf2012} and an extension to arbitrary dimensions in     Ref.~\onlinecite{lippoldtSpinbaseInvarianceFermions2015}. One surprising feature of the whole approach is that the local Lorentz or diffeomorphism transformations acting on the space-time index of the $\gamma_\mu$ and the spin base transformations acting on the ``matrix part'' by conjugation decouple. Thus, in a proper sense, spinors behave as scalars under Lorentz transformations. 
	 
	 Despite the benefits of this approach, some conceptual questions have remained, partially because the approach is formulated in a local, coordinate and spin base dependent way (though a well-defined transformation behaviour under coordinate and base changes is defined):
	 \begin{itemize}
	   \item Do the Dirac matrices correspond to a global geometric object?
	   \item Are there any global obstructions towards this formalism? In particular, how does this formalism relate to the global geometrical approaches using $\spin$ or $\spin^\com$ structures (see e.g.    Refs.~\onlinecite{lawsonSpinGeometryPMS382016a,figueroa-ofarrillSpinGeometry2017}).
	   \item What does the statement ``spinors transform like scalars under Lorentz transformation'' really mean? How does it fit to the central role played by the Lorentz group  for spinors on flat space times in  quantum mechanics  and quantum field theory?  
	   \item From a more pragmatic point of view:  In  Ref.~\onlinecite{giesGlobalSurplusesSpinbase2015} Gies and Lippoldt showed that there is a global realization of the Clifford algebra on a 2-sphere, which is not possible within the vielbein formalism. Does this generalize? Is there a geometric origin for this simplification?
	 \end{itemize}
	 
	 In this article, we will answer those questions by giving a global geometric description of the spin base formalism.  It will turn out that 
	 \begin{itemize}
	     \item The Dirac matrices actually correspond to a geometric object, namely a global section in a suitable bundle, denominated ``Dirac structure'' in the following.  It exists  in all cases where spinors are  defined ($\spin$ structure, $\spin^\com$ structure, and  in the case of the spin base formalism as a basic requirement).  This is different from vielbeins, which are by definition local sections in the bundle of (pseudo)-orthonormal frames, and which exist globally only iff this bundle is trivial, i.e. iff the space-time is parallelizable (see also remark below).  
	     \item The spin base formalism exists if and only if a $\spin^\com$ structure exists: If the latter is defined, one may derive a corresponding Dirac structure, conversely, if a Dirac structure is defined, one may construct from it a $\spin^\com$ structure essentially by reduction of structure groups using the Dirac structure (in a physical language by partial gauge fixing). 
	     \item Thus, the Lorentz transformations turn out to be intertwined with the local spin base transformations by the mere existence of a Dirac structure in a rather subtle way.
	     \item The additional gauge freedom can be used to simplify calculations as shown for the $S^2$-example by Gies and Lippoldt. We generalize this setting to arbitrary spheres $S^n$ and show that the enhanced spin base symmetry can be used to trivialize the spinor bundle over this larger group. It turns out that this is related to the fact that the Whitney sum of the tangent bundle and the normal bundle is trivial and that Clifford algebras in dimension $n$ and $n+1$ are closely connected. 
	 \end{itemize}
	 
	 Despite the fact that local spin base invariance does not avoid the known obstructions to the existence of  $\spin^\com$ structures\cite{lawsonSpinGeometryPMS382016a}, the approach is nevertheless not redundant and worth studying for at least two reasons:
	 \begin{itemize}
	     \item The global existence of a Dirac structure (whenever spinors may be defined) is another indication that it may be a better candidate for a variable in the quantization of general relativity (GR) than vielbeins: The latter do normally (unless space-time is parallelizable) only exist locally, the only ``really existing'' global objects behind them are the bundle of orthonormal frames and the metric itself, which always exist on any space-time, completely irrespective of whether spinors may be defined on this space-time or not. So the frequent claim in physics literature that vielbeins are needed because fermions exist in nature does turn out to be rather  an argument in favour of Dirac structures, and not in favour of vielbeins.
	     \item Despite the fact that the involved  bundles can be reduced to a $\spin^\com$ structure, the more general setting of Dirac structures with local spin base invariance allows for generalizations of general relativity and additional degrees of freedom, e.g. by inclusion of spin torsion as in Ref.~\onlinecite{giesFermionsGravityLocal2014} and may have an impact for example on theory space for a  functional renormalization group approach.
	 \end{itemize}

	\begin{remarks} ~ 
	\begin{itemize}
		\item   Geroch  has shown in Ref.~\onlinecite{gerochSpinorStructureSpacetimes1968} that  non-compact Lorentzian manifolds in 4 dimensions are always parallelizable. Hence, the whole discussion about global aspects, existence of global vielbeins,...  is   mandatory only in  more general cases: This includes non-Lorentzian manifolds, in particular the seemingly simpler case of Riemannian  manifolds, compact Lorentzian manifolds or Lorentzian manifolds of dimensions higher than 4, widely studied in physics literature (e.g. for Kaluza-Klein like theories or string theory). However, since this parallelization is not canonical, and the proof shows that this is loosely speaking a coincidence for Lorentzian manifolds in dimension 4 or less, an approach avoiding global vielbeins and using this parallelizability may be conceptually preferable even in the four dimensional Lorentzian case. 
		(The proof is   based  on the coincidence of restrictions from obstructions towards existence of Lorentzian metric on any  manifolds, and the fact that in 4 dimensions, concerning the homotopy groups of  $\mathrm{SL}(2,\com)$,     ``The third homotopy group fails to vanish, but at this point we are sufficiently close to the dimension of the manifold that the obstruction to extending a cross section can be made to vanish''   \cite{gerochSpinorStructureSpacetimes1968}).  
		\item It is well known that a Clifford bundle may be defined globally without any additional structure and any obstructions on any Riemannian or Lorentzian manifold (see appendix \ref{sec:app_prop_cliff}). Hence, the central structure here is the Dirac structure which connects the (co)tangent space of the manifold with endomorphisms in a complex vector bundle in a way compatible with the Clifford algebra (thus defining a representation of the Clifford bundle on a spinor bundle, see appendix \ref{sec:dirac_cliff_bundle}).
		\item A related approach with a  much further going generalization of spin bundles is considered in  Ref. \onlinecite{finsterSpinorsSingularSpaces2019}:
		In the context of so called causal fermion systems even non smooth generalizations of spin bundles are allowed. Here, even in the smooth setting, where fibre bundles are regained, the fibres have a (pseudo) scalar product which a priori is independent of any (Riemannian or Lorentzian) metric on the base manifold. However, if one imposes additional constraints to achieve this connection, a classic spin structure with it's topological restrictions (vanishing second Stiefel-Whitney class) is restored, with additional restrictions if the existence of a Clifford section is required. However, those constraints  lead to a $\spin$ structure  and  not to the more general  $\spin^\com$ structures naturally arising in our approach.
		\item In the following, we will restrict ourselves to the case of irreducible representations of the Clifford algebra, so essentially to a single fermion. This excludes Kähler fermions (Ref. \onlinecite{kahlerInnereDifferentialkaklul,Graf1978DifferentialFA,banksGeometricFermions1982,bennFermionsSpinors1983}), which are known to exist on any (pseudo-)Riemannian manifold, but always come as multiplets of e.g. four Dirac-fermions in four dimensions.
	\end{itemize}
	\end{remarks}
		
	The overall setting of this article is as follows: In the next section we give a very short introduction to the spin base approach, focussing only on those aspects needed for a global geometric view and to address the topics above. In particular, the central object, the  Dirac structure, is given a global geometric definition. 
	
	In section \ref{sec:dirac_from_spin}, the standard approach to spinors using $\spin$ or $\spin^\com$ is briefly sketched as well in order to lay the foundation to relate them to the spin base approach in the following sections.  We show that the standard $\spin$/$\spin^\com$ formalisms have a naturally defined Dirac structure and allow for the extension of the structure groups to yield the local spin base invariant setting.
	
	In section \ref{sec:reduction}, which is the central part of this paper,  we show that conversely, the existence of a global Dirac structure over a pseudo-Riemannian manifold ensures the existence of a $\spin^\com$ structure. The whole construction is natural in the sense that if one starts from a Dirac structure and constructs the corresponding  $\spin^\com$ structure, then  the   Dirac structure  corresponding to this $\spin^\com$ structure  is the original Dirac structure.
	
	In section \ref{sec:metric_conn},   we show how the existence of a metric on the spinor bundle and a $\gamma$ compatible connection may be easily derived, under some slight restriction,  from the results of the previous section and some elementary facts about Clifford algebras over a fixed vector space.

	In section \ref{sec:sphere}, we generalize and make more precise the results of Gies and Lippoldt on the 2-sphere to arbitrary spheres by using global geometric arguments without any tedious coordinate computations, thus supporting the view that the spin base approach does not only have conceptual, but also pragmatic calculational advantages.
	
	Finally in  \ref{sec:conclusion}, we give some conclusions, and in the appendices we collect some useful facts about Clifford algebras and principal bundles as a reference for readers without firm background in those topics.

\section{The idea of local spin base invariance in geometric terms}
	
\subsection{The idea of spin base invariance in local coordinates}
	In most approaches to spinors on curved space time, the central focus is on so called tetrads or vielbeins, a (smooth) choice of local Lorentz (= pseudo-orthonormal) frames over each space-time point. The spin group (the double cover of the Lorentz/pseudo-orthogonal group $\so(r,s)$)  acts both on the vielbeins  and on the spinors, which locally can be identified with  complex vector valued functions (For a geometric formulation using principal and vector bundles see section \ref{sec:spin_structs}). Both are intertwined by the property of Dirac matrices  
	\begin{equation} \label{eq_flatdirac_trafo}
		\rho(g) \gamma_{a} \rho(g)^{-1}=(\Lambda(g)^{-1}) ^{b} _{~~ a}\gamma_b ,
	\end{equation}
	where $	\rho(g) $ is a spin representation of a Lorentz transformation (more precisely, of an element $g$ of the spin group  covering this Lorentz transformation)  and $\Lambda (g) $ the vector representation of the same group element.
	The right-hand side transforms as a co-vector with respect to the space-time index, since we are using $\gamma$ with a lower index. (By raising the indices with the metric, and using the definition of Lorentz transformations, an analogous relation holds for $\gamma^a$ with upper index, transforming as a vector.) 
	
	The defining property for the Dirac matrices $\gamma_{a}$ is the Clifford relation: 
	\begin{equation} \label{eq:flatdirac_def}
	 \{ \gamma_{a}, \gamma_{b}\} = - 2 \eta_{ab} \openone,
	\end{equation} 
	where $\eta$ is diagonal with $r$ elements $1$ and $s$ elements $-1$ ($s=1$ in the Lorentz case), and $\openone$ denotes the identity matrix. 
	As is well known, those matrices are unique up to similarity transformations and (in the case of odd dimensions) a sign flip and are of size $d_\gamma \times d_\gamma$ with $d_\gamma=2^{\lfloor D/2\rfloor}$. (Here $\lfloor \ldots\rfloor$ denotes the floor function, see appendix \ref{sec:app_prop_cliff}).
	
	\begin{remark}We use the sign convention of Refs.~\onlinecite{lawsonSpinGeometryPMS382016a,figueroa-ofarrillSpinGeometry2017} on the right hand side used in most of the mathematical literature. The plus sign chosen in most of the physical literature corresponds to exchanging $r$ and $s$. This has no impact on $\spin(r,s)$, since $\spin(r,s)$ and $\spin(s,r)$ are isomorphic, however it has to be taken into account when considering the real Clifford algebra $\cl(r,s)$ where $\cl(r,s) \not\cong \cl(s,r)$ . 
	\end{remark}
	To define spinors as physical fields, one needs a Dirac operator and  hence a connection, which can be locally lifted from the Levi-Civita-connection via the vielbeins,   and $\gamma$ matrices fulfilling Eq.~(\ref{eq:flatdirac_def}). The latter can be simply chosen as a fixed set of constant matrices.
	
	The spin base formalism is an extension and modification of this formalism, where no vielbeins are needed in the local coordinate formulation. It is based on the following observations:
	\begin{itemize}
	    \item Whenever a set of Dirac matrices $\gamma_{a}(x)$ fulfills Eq.~(\ref{eq:flatdirac_def}) for all $x \in M$, then $ S(x) \gamma_{a}(x) S(x)^{-1}$ fulfills this relation as well for any $\glc$-valued function $S(x)$.
	    \item Hence, we may decouple the matrix transformation part from the space-time index trans\-formation part, allowing for arbitrary local frames in the tangent space as base, not only orthonormal ones. In particular, one may choose a set of   holonomic base vectors $\partial_\mu = \frac{\partial}{\partial x^\mu}$ if one replaces $\eta_{ab}$ by $g_{\mu\nu}$ in Eq.~(\ref{eq:flatdirac_def}).
	    \item Since $g_{\mu\nu}$ is completely determined by $\gamma_\mu(x)$, the latter may serve as degrees of freedom  for quantum gravity as long as the Clifford constraint is imposed.
	\end{itemize}
	
	Hence, the formalism is formulated in terms of space-time dependent $\gamma$ matrices  fulfilling the Clifford 	condition
	\begin{equation} \label{eq_cliff}
		\{\gamma_\mu(x), \gamma_\nu (x)\} =-  2 g_{\mu \nu}(x)   \openone
	\end{equation}
	and transforming as a co-vector in the space-time index with respect to arbitrary local coordinate transformations on $M$  and under the conjugation by  $S(x)$ for an arbitrary $\glc$-valued function $S(x)$. For consistency, the spinors themselves must transform under $S(x)$ as well, whereas the local coordinate transformations have no impact on the spinor (if considered as passive coordinate transformations). Hence,  the spinors are treated as  scalars under coordinate transformation in this sense.
	
	The whole formalism may be extended by requiring the existence of a compatible (indefinite, but non-degenerate) metric on the spinors and of a connection compatible with $\gamma_\mu$. As shown in detail in    Refs.~\onlinecite{weldonFermionsVierbeinsCurved2000,giesFermionsGravityLocal2014,lippoldtFermionischeSystemeAuf2012,lippoldtSpinbaseInvarianceFermions2015}, this allows the formulation of spinors and of general relativity  in terms of the new coordinates without using vielbeins. It allows to study extensions of GR, e.g. by allowing for ``spin torsion''. Furthermore, arguments based on path integral measures and renormalization group considerations are given there to support the view that those coordinates might be better  coordinates than vielbeins.

\subsection{Spin base invariance and geometry} \label{sec:spin_inv_geom}
	As already indicated in the introduction, the formulations of the spin base formalism in the literature is based on a local coordinate consideration (including transformation under coordinate changes). Thus, questions on the global existence of the formalism and its objects for arbitrary (pseudo)-Riemannian manifold, potentially depending on global (topological)  aspects cannot be easily addressed in this formalism. Furthermore, some of the proofs require rather involved computations with $\gamma$ matrices.
	
	For this reason, we first translate the formalism into the language of principal bundles and associated vector bundles    \cite{kobayashiFoundationsDifferentialGeometry2009,eguchiGravitationGaugeTheories1980, nakaharaGeometryTopologyPhysics2018, steenrodTopologyFibreBundles1999,baez1994gauge} (see also appendix \ref{sec:conn_principal}). This allows to answer those questions and to show that the formalism is indeed globally well-defined (under appropriate conditions). Furthermore, this will give additional hints that the Dirac matrices may be indeed preferable coordinates as compared to vielbeins.

	The central component of the spin base invariance approach are the  $\gamma_\mu(x)$ forming a Clifford algebra,  
	which transforms in the space-time-index $\mu$  like a co-vector in $T^*M$ and as a matrix  by  conjugation with a separate $\glc$. 
	
	We now translate this into a geometric language: Since the space-time and the spin indices transform separately, this means $\gamma_\mu$ must lie in the tensor product of two different vector bundles. The first, corresponding to $\mu$ obviously is $T^*M$ (or $TM$ if we prefer to consider $\gamma^\mu$ instead of $\gamma_\mu$, which can be easily translated into one another using $g$). The second bundle must be related to a complex vector bundle of dimension $d_\gamma$, but since $\gamma^\mu$  does not transform with the defining representation of $\glc$, but by conjugation,   this means it is the bundle of endomorphism of a complex vector bundle $E$ of dimension $d_\gamma$.

	Hence, we are lead to the following definition: 
    \begin{definition}[Dirac Structure]
        Let $(M,g)$ be a pseudo-Riemannian manifold of dimensions $D$ and $E$ a complex vector bundle over $M$ of dimension $d_\gamma=2^{\lfloor D/2\rfloor}$ . Let $\{,\}$ denote the fibre-wise anticommutator of endomorphisms of $E$ and $ \openone \in \Gamma( \mathrm{End}(E))$ denote the identity isomorphism on each fibre. A \textbf{Dirac structure} is a global section of ${T^*M \otimes \mathrm{End}(E)}$ such that
        \begin{equation}
            \{\gamma(X),\gamma(Y)\} = -  2 g(X,Y)  \openone    \label{eq:dirac_str}
        \end{equation}
        for any two vector fields $X,Y \in \Gamma(T M)$.
    \end{definition}

    \begin{remarks} ~
        \begin{itemize}
         \item  As stated before, we restrict ourselves to the case of irreducible representations of the Clifford algebra.
         \item 
            By a polarization argument, it is sufficient to request $ \gamma(X) \circ \gamma(X) = - g(X,X)  \openone$ for all vector fields  $X \in \Gamma(T M)$, which is closer to the way      Clifford algebras are mostly defined in the mathematical literature.
        \end{itemize}
    \end{remarks}

	With this definition, the central requirement of the spin base formalism is the existence of a global object, namely the Dirac structure, on which we will focus in the following two sections. The two additional structures needed to define a reasonable physical theory, namely the spin metric and a compatible connection, will turn out to exist globally  whenever the Dirac structure is defined, hence we defer those to section \ref{sec:metric_conn}. 
	
	\begin{remark}
	One might argue that the local existence of a Dirac structure might be sufficient. However, this is not the case: It is a central component of the spin base formalism, in particular needed to define a compatible connection. As we will show in \ref{sec:metric_conn}, the compatibility condition for the connection, despite looking very similar to a seemingly corresponding equation in the vielbein formalism, the so called ``vielbein postulate'',  has a completely different meaning: whereas the latter is a (still computational useful) triviality, namely the statement
	$\nabla \textrm{id}=0$ (see e.g. appendix J of    Ref.~\onlinecite{carrollSpacetimeGeometryIntroduction2019}  or section \ref{sec:metric_conn}), the former is an essential compatibility condition implementing a slightly stricter condition than metricity. Furthermore, the existence of a global Dirac structure is necessary for the existence of  a complex vector bundle carrying a representation of the Clifford bundle. Finally, since it exists whenever a $\spin^\com$ structure is defined, so in particular  whenever a $\spin$ structure is defined, the existence of a Dirac structure is a central component of any approach to spin on curved manifolds. 
	\end{remark}

	Note that we do not assume a priori any connection between the vector bundle $E$ and $TM$, in terms of local coordinates:  transition functions may be defined for $E$ completely independently from the transition functions in $TM$ induced by the respective coordinate mappings, thus implementing full spin base invariance.
	
	However, it turns out that the existence of the Dirac structure $\gamma$ with its  Clifford algebra condition are so strict that it does intertwine the transition functions of $E$ and $TM$ to some extent, so stating that spinors act like scalars under coordinate transformations turns out to be correct only with some mild caveat.

\section{Locally invariant Dirac structure from Spin and Spin$^\com$ ~structures } \label{sec:dirac_from_spin}
\subsection{Spin and Spin$^\com$ ~structures} \label{sec:spin_structs}
    For completeness and in order to fix notation, we briefly recapitulate the standard definitions of Spin and Spin$^\com$ ~structures as needed in the following (see e.g.  Refs.~\onlinecite{lawsonSpinGeometryPMS382016a, figueroa-ofarrillSpinGeometry2017, nakaharaGeometryTopologyPhysics2018}) before showing the existence of a Dirac structure in those standard approaches to spin: 
    
    Let  $(M,g)$ be an orientable pseudo-Riemannian manifold with signature $(r,s)$. All manifolds are assumed to be connected, orientable, smooth, paracompact, Hausdorff. We restrict  the following considerations to the case $D\coloneqq \Dim(M) =   r+s \geq 3$ and keep $r,s$ fixed. To avoid cluttered notation, the indices $r,s$ are not explicitly added for most objects, so for example $\so(M)$ denotes the bundle of pseudo-orthogonal frames with respect to $g$. 
    
    We define $\spin(r,s)$ as the Lie group forming a double covering $\Pi: \spin(r,s) \rightarrow \so(r,s)$ of $\so(r,s)$ with $\Pi(g h ) = \Pi(g) \Pi(h)$.  It may be obtained as the even product of elements of norm $1$ of the underlying vector space of the Clifford algebra (see appendix \ref{sec:app_prop_cliff}). For the Riemannian and the Lorentzian case, the component of the identity $\spin_0(r,s)$ is known to be simply connected and thus to form the universal cover of $\so_0(r,s)$, 
    \begin{remark}
         This is equivalent to a definition  by the exact sequence of group homomorphisms: 
         \begin{equation} \label{eq:cov_spin} 1 \rightarrow \mathds{Z}_2  \rightarrow \spin(r,s)  \xrightarrow{\Pi} \so(r,s) \rightarrow 1  \quad .\end{equation}
    \end{remark}

    A spin structure on $(\mathrm{M}, g)$ is a principal $\spin(r, s)$  bundle $\spin(\mathrm{M}) \xrightarrow{\tilde{\pi}} \mathrm{M}$ together with a bundle morphism $\Phi: \spin(M) \rightarrow {\so(M)}$
    which restricts fibrewise over each point in $M$ to the covering homomorphism  $\Pi: \operatorname{Spin}(r,s) \rightarrow \mathrm{SO}(r,s)$, or more explicitly: 
    \noindent an equivariant bundle morphism $\Phi:$ $\Phi(p g) = \Phi(p) \Pi(g) ~ \forall p\in  \spin(\mathrm{M}), g \in \spin(s, t)$ such that $\pi (\Phi(p)) = \tilde{\pi}(p) $ for all $ p\in  \spin(\mathrm{M})$, i.e., a point $p$  in a fibre of \spin(M)  over a point $x\in M$ is mapped to a point in a fibre of $\so(M)$ over the same point $x$:
    \begin{equation}\begin{tikzcd}[sep=tiny]
    	{\spin(M)} & {} & {\so(M)} \\
    	& M
    	\arrow["\Phi", from=1-1, to=1-3]
    	\arrow["\pi", from=1-3, to=2-2]
    	\arrow["{\tilde{\pi}}"'', from=1-1, to=2-2]
    \end{tikzcd}\end{equation}

    The principal bundle $\tilde{\pi}: \spin(\mathrm{M}) \rightarrow M$ is also called the bundle of spin frames over $M$.
    From a physical point of view, this rather technical global formulation essentially means that the ambiguity in sign, resulting from the fact that spinors change sign under a rotation by $ 2 \pi$ may be resolved for the transition functions of local charts in a consistent way. This need to choose consistently a branch in the double covering of $\so(r,s)$ on an overlap of three coordinate neighbourhoods leads to the known obstruction that the second Stiefel-Whitney class of $M$ has to vanish in order for a spin structure to exist.
    
    This obstruction may be somewhat weakened by considering a more general structure, the $\spin^\com$ structure. For its definition, we first need the notion of the $\spin^\com$   group:

    It is the quotient $\spin^{\com}(r,s)=\left(\spin(r,s) \times \uone\right) /  { \mathds{Z}_2}$  with $\mathds{Z}_2$ acting  as  $(g,h) \mapsto (-g,-h)$.  Equivalently , it is defined by the exact sequence:  
    \begin{equation} \label{eq:def_spinc}
        1 \rightarrow \mathrm{Z}_{2} \rightarrow \operatorname{Spin}^{\mathrm{C}}(r,s) \xrightarrow{\rho} \mathrm{SO}(r,s) \times \mathrm{U}(1) \rightarrow 1 \quad.\end{equation}
    The mapping $\rho$ may be explicitly defined  as $\rho( [(g,u)]) \coloneqq (\Pi(g), u^2)$, where $[(g,u)]$ denotes the $\mathds{Z}_2$-equivalence class of $(g,u) \in \left(\spin(r,s) \times \uone\right)$. This is obviously well-defined since the sign ambiguity in choosing a representative $(g,h) \in [(g,h)]$ drops out on the right hand side. 
    
    A $\spin^\com$ structure on $SO(M)$ consists of a principal $\mathrm{U}(1)$   bundle $\mathrm{U}_{1}(M)$ over $M$ and  a principal $\spin^\com(r,s)$  bundle $\spin^\com(M)$ with a $\spin^\com(r,s)$-equivariant bundle map
    \begin{equation} \label{eq;spin_com_cov}
    \begin{tikzcd}[sep=tiny]
    	{\spin^\com(M)} && {\so(M)\times U_1(M)} \\
    	& M
    	\arrow[from=1-1, to=2-2]
    	\arrow[from=1-3, to=2-2]
    	\arrow["{\Phi^\com}", from=1-1, to=1-3]
    \end{tikzcd}
    \end{equation}
    Intuitively, the additional $\mathrm{U}(1)$  bundle may be considered as corresponding to a $\mathrm{U}(1)$ charge which leads to an additional phase which in some cases can compensate the obstruction of choosing ``the right branch'' in the double cover of $\so(r,s)$. However, even for $\spin^\com$ structures there is a (weaker)  obstruction preventing it from existing on all manifolds ( the second Stiefel-Whitney-class  $w_{2}(\so(M))$ must be the mod 2 reduction of an integral class, see   Ref.~\onlinecite{lawsonSpinGeometryPMS382016a}).

    For a $\spin$ structure we can define the associated vector bundle, the spinor bundle,  of dimension
    $d_\gamma \coloneqq \lfloor\frac{D}{2}\rfloor$:

    \begin{equation} \label{eq:assoc_vb_sp}
        E \coloneqq \spin(M) \times_{\spin(r,s)}  \com^{d_\gamma}  =  \left( \spin(M) \times  \com^{d_\gamma}\right) / \thicksim
    \end{equation}	
	 where the equivalence relation $\thicksim$ is defined by   $\forall g \in   {\spin(r,s)}: ~(p,v)\sim (p g,g^{-1}v) $, with a completely analogous definition for the $\spin^\com$ case.

	The main motivation for considering  $\spin^\com$ instead of $\spin$ is that the obstruction towards existence of such bundles is much weaker. In particular, it is known that any 4 dimensional manifold admits a  $\spin^\com$ structure, which is not true for  $\spin$. (Though in the non-compact Lorentzian case, any 4 dimensional manifold is parallelizable and hence admits a spin structure    \cite{gerochSpinorStructureSpacetimes1968}.)

\subsection{Existence of a global Dirac structure for \spin- and $\spin^\com$ structures} \label{sec:exist_dirac_spin}

	\begin{theorem}
	  A Dirac structure $\gamma$  exists,  whenever a $\spin$ or $\spin^\com$ structure over $(M,g)$ is defined. It is naturally defined and unique up to an overall global $\glc$ transformation and, only in the odd dimensional  case, a potential sign flip. 
    \end{theorem}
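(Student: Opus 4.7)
The plan is to construct the Dirac structure via the associated-bundle formalism by promoting a fixed flat-space realization of the Clifford algebra to a constant equivariant element in the typical fibre. The decisive input is Eq.~(\ref{eq_flatdirac_trafo}), which says precisely that such a realization is equivariant under the combined action of $\spin(r,s)$ on spacetime indices (via $\Pi$ followed by the coadjoint representation of $\so(r,s)$) and on matrix indices (by conjugation via the spin representation $\rho$).

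First I would fix constant matrices $\tilde\gamma_a \in \mathrm{End}(\com^{d_\gamma})$ realising an irreducible representation of $\cl(r,s)$, together with the associated spin representation $\rho:\spin(r,s)\to\glc$, and regard
\begin{equation*}
\tilde\gamma:(\real^D)^*\to\mathrm{End}(\com^{d_\gamma}), \qquad \omega\mapsto\omega^a\tilde\gamma_a,
\end{equation*}
as a single element of $(\real^D)^*\otimes\mathrm{End}(\com^{d_\gamma})$. Equation~(\ref{eq_flatdirac_trafo}) is then exactly the statement that this element is invariant under the corresponding tensor-product representation of $\spin(r,s)$.

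Using the bundle morphism $\Phi$, the cotangent bundle appears as the associated bundle $T^*M\cong\spin(M)\times_{\spin(r,s)}(\real^D)^*$; with $E$ as in (\ref{eq:assoc_vb_sp}) one obtains $\mathrm{End}(E)\cong\spin(M)\times_{\mathrm{Ad}\circ\rho}\mathrm{End}(\com^{d_\gamma})$, so that $T^*M\otimes\mathrm{End}(E)$ is associated to the tensor-product representation. Because invariant elements of the typical fibre correspond bijectively to constant, and in particular globally defined, sections of the associated bundle, $\tilde\gamma$ descends to a canonical section $\gamma\in\Gamma(T^*M\otimes\mathrm{End}(E))$. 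The Clifford identity (\ref{eq:dirac_str}) then holds pointwise: the flat relation $\{\tilde\gamma_a,\tilde\gamma_b\}=-2\eta_{ab}\openone$ transports verbatim because in every pseudo-orthonormal frame obtained from $\so(M)$ the metric $g$ is represented by $\eta$.

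For the $\spin^\com$ case, the same construction applies once one observes that on the spin representation space $\com^{d_\gamma}$ the $\uone$-factor acts by scalar multiplication and therefore trivially on $\mathrm{End}(\com^{d_\gamma})$ by conjugation; the $\so(M)$-component of the covering (\ref{eq;spin_com_cov}) identifies $T^*M$ as an associated bundle of $\spin^\com(M)$, and the same constant element $\tilde\gamma$ descends to a section. Uniqueness then follows from the classical fact (recalled in the Clifford-algebra appendix) that irreducible complex representations of $\cl(r,s)$ are unique up to an overall similarity transformation, with an additional sign flip when $D$ is odd: any change of $\tilde\gamma_a$ is implemented by a single element $S\in\glc$ acting by conjugation, producing the stated global ambiguity on $\gamma$. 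The main obstacle I expect is not technically deep but purely one of careful bookkeeping: keeping the various actions (the covering $\Pi$, the conjugation $\mathrm{Ad}\circ\rho$, the $\mathds{Z}_2$-quotient in $\spin^\com$, and the distinction between vector and covector representations) consistently aligned so that fibrewise $\spin(r,s)$- or $\spin^\com(r,s)$-invariance genuinely translates into a well-defined and natural global section.
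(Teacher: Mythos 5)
Your proof is correct and follows essentially the same strategy as the paper: both rest on the $\spin(r,s)$-equivariance of a fixed flat-space realization of the Dirac matrices, i.e.\ Eq.~(\ref{eq_flatdirac_trafo}), to produce a well-defined global section of $T^*M\otimes\mathrm{End}(E)$, and both dispatch the $\spin^\com$ case by observing that the $\uone$ and $\mathds{Z}_2$ factors act trivially under conjugation. You package this via the general correspondence between invariant vectors in the typical fibre and globally defined constant sections of the associated bundle, whereas the paper carries out the equivalent explicit computation with local trivializing sections $\chi_U$ and transition functions $\Psi_{UV}$ (culminating in Eq.~(\ref{eq_global_gamma})); the mathematical content is the same.
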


    The proof  is a  geometric reformulation in the spirit of ``locally spin base invariant Dirac structures'' of the well known fact that the Dirac matrices in the vielbein language transform as co-vectors with respect to their space-time index  under Lorentz transformations: 
    
    \begin{proof}[Proof:]
	
	For simplicity, we start with the $\spin$ case  and then show  that it almost trivially extends to  $\spin^\com$. So we assume we are given a $\spin(r,s)$ bundle $\spin(M)$ over  $M$.

    Now, it is well known  (see appendix \ref{sec:app_prop_cliff}) that $\com^{\lfloor\frac{D}{2}\rfloor}$ carries an irreducible  $\pin(r,s)$ representation which restricts to a $\spin(r,s)$ representation. (For $D$ even, this  $\pin(r,s)$representation is unique up to isomorphism, but reducible as a $\spin(r,s)$ representation, for $D$ odd, there are two inequivalent representations, both of which are irreducible as $\spin(r,s)$ representations).
  
    Let $E$ be the the associated vector bundle according to Eq.~(\ref{eq:assoc_vb_sp}).

	To define $\gamma$ globally, we first choose a local section of $\spin(M)$, i.e. a smooth  map $\chi_U: U  \rightarrow \spin(M)$ which maps each $x\in U$ to an element of the fibre of $\spin(M)$ over $x$  for an open subset $U \subset M$, define it there and then show that it does not depend on the choice of the local section by considering transition functions.
	
	A local section $\chi_U$ over $U$ of the principal bundle $\spin(M)$ defines a local trivialization $\varphi_U: \tilde{\pi}^{-1}(U) \rightarrow U  \times \spin(r,s)$ via \begin{equation} 
	\varphi_U(\chi_U(x) ) = (x,e) ,  ~~ \varphi_U(\chi_U(x) \cdot g) \coloneqq (x,g) \label{eq:def_triv}\end{equation}
	for any $x\in U, g \in \spin(r,s)$, 
	where $e$ is the identity in $\spin(r,s)$. This is  well  defined, since by definition of a principal bundle, the structure group acts by a free and transitive action from the right on each fibre. 
	
	By projection of principal bundles $\Phi: \spin(M) \rightarrow \so(M)$,  $\chi_U$ defines a local section in $\so(M)$, i.e. a Lorentz frame (vielbein) in $T_xM$ over each point in $U$. In addition, it defines a local trivialization of the associated complex spinor bundle $E$. 
	Define $\gamma |_U$ by a fixed chosen standard representation of the Dirac matrices as constant matrices on the vector space $ \com^{d_\gamma}$ in the induced local trivializations $U \times \com^{d_\gamma}$ of $E$ over $M$ and $U \times \real^{D}$ of $T^*M$ . 
	
	Now, for another open neighbourhood $V$ the  transition function $\Psi_{UV}: U  \cap V \rightarrow \spin(r,s)$ is defined
	by $\varphi_V(p) = \Psi_{UV}(\tilde{\pi}(p)) \cdot \varphi_U(p)$, where the group action ``$\cdot$'' is the left multiplication of the second factor $g$ in $(x,g) \in (U\cap V) \times \spin(r,s) $, leaving the base point $x\in U \cap V$ invariant, or more explicitly: 
	\[ \begin{split}
	\Psi_{UV}&: U  \cap V \rightarrow \spin(r,s)\\ \Psi_{UV}(x) &\coloneqq \pi_2(\varphi_V \circ \varphi_U^{-1} (x,e)),
	\end{split}\]
	where  $ \pi_2: V \times \spin(r,s) \rightarrow \spin(r,s) $ is just the projection on the second factor $\spin(r,s)$. Equivalently, $\Psi(x) $ may be expressed directly via the local sections as: 
	\begin{equation}  \chi_V(x) =\chi_U(x) \cdot \Psi^{-1}_{UV}(x) , \end{equation} as may be easily derived from Eq.~(\ref{eq:def_triv}).
	Note that the transition functions act on the local trivializations from the left, but on the section from the right with an inverse, as is needed for consistency. 
	Hence, the Dirac matrices  $ \gamma_i$ transform as:
	
	\begin{equation} \label{eq_global_gamma}
		\tilde{\gamma}_i(x) =  \left(\Pi(\Psi_{UV}(x))^{-1}\right)^{k}_{~i}  
		~ \Psi_{UV}(x) \cdot \gamma_k \cdot \Psi_{UV}(x)^{-1}  = \; \gamma_i
	\end{equation}
	where    we have used  the property Eqs.  (\ref{eq_flatdirac_trafo}) or (\ref{eq:inv_gamma}) of flat Dirac matrices.
	
	This means that the transformed $\gamma$ has precisely the same form and value (namely the same  constant standard Dirac matrices) in the new coordinates on $V$.  Hence the procedure defines a global well-defined section    $\gamma \in \Gamma( T^*M  \otimes \mathrm{End}(E))$ over all of $M$ independently of the local trivializing section $\varphi$.

	The only choice is the choice of fixed constant Dirac matrices which is known to be unique up to an overall global $\glc$ transformation and a potential global sign flip (the latter occurs only in the odd dimensional case).
	
	For a $\spin^\com$ structure, the same construction works with a few minor modifications: The structure group is no longer $\spin$ but $\spin^\com \coloneqq (\spin \times \uone)/\mathds{Z}_2 $.
	
	Since the $\uone$ factor is in the center of $GL(d_\gamma)$ this factor has no impact and drops out in Eq.~(\ref{eq_global_gamma}) (Care must be taken due to the quotient by $\mathds{Z}_2$.  However, since a factor of $(-1) \in \mathds{Z}_2$ always appears twice and hence cancels as well). 
	\end{proof}

\subsection{Extension of $\spin$/$\spin^\com$ to spin base invariant Dirac structure} \label{sec:ext_spin_dirac}

    In this section we explicitly show the (almost obvious) fact that, given a $\spin^\com$ structure (or as a special case a $\spin$ structure) we may always extend the structure group on the involved bundles such that we end up with the ``spin base invariant Dirac'' structure:

	\begin{theorem}\label{thm:ext} Let $(M,g)$ be an orientable pseudo-Riemannian manifold with a $\spin$ or $\spin^\com$ structure. 
	\begin{itemize}
	   \setlength{\itemsep}{2pt}
		\item The structure group of the complex vector bundle $E$ may be extended from $\spin(r,s) $ or $\spin(r,s)^\com $ to $\glc$ defining a bundle $E^G$ with structure group  $\glc$  (which, as a manifold, is the same as $E$).
		\item $T^*M \otimes \mathrm{End}(E^G)$ transforms under $\glr \times \glc$  as structure group, i.e. space time index and spinor indices transform separately (the endomorphism part obviously transforming by conjugation with  $\glc$. 
		\item $\gamma$ naturally extends to a section in $T^*M \otimes \mathrm{End}(E^G)$.
	\end{itemize}
	\end{theorem}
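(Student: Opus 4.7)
My plan is to build the entire statement around the standard construction that extends a principal $G$-bundle along a group homomorphism $\rho: G \to H$: namely $P \times_G H$ with $G$ acting on $H$ via left multiplication by $\rho$ is a principal $H$-bundle, and any $G$-associated vector bundle whose defining representation factors through $\rho$ is canonically isomorphic as a vector bundle to the corresponding $H$-associated bundle. I apply this to $\rho: \spin^\com(r,s) \to \glc$ (with the $\spin$ case as a sub-case). Concretely, $\rho$ is the usual spin representation on the $\spin$ factor combined with scalar multiplication by the $\uone$ factor, $\rho([(g,u)]) \coloneqq u \cdot \rho_{\spin}(g) \in \glc$. The only real check is that this descends through the quotient by $\mathds{Z}_2$: the flip $(g,u) \mapsto (-g,-u)$ gives $(-u)(-\rho_{\spin}(g)) = u\,\rho_{\spin}(g)$, so $\rho$ is well-defined. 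This is the only part of the argument that is not purely formal.

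For the first item, I would then form $P^G \coloneqq P \times_{\spin^\com(r,s)} \glc$ (with $P$ the $\spin^\com$ bundle) and set $E^G \coloneqq P^G \times_{\glc} \com^{d_\gamma}$. The map $[[p,A],v] \mapsto [p, Av]$ defines a canonical vector-bundle isomorphism $E^G \to E$, showing that the underlying smooth manifold is unchanged and only the allowed transition functions have been enlarged from $\rho(\spin^\com(r,s)) \subset \glc$ to all of $\glc$.

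For the second item, I would observe that $T^*M$ is associated to the full frame bundle $GL(M)$ with structure group $\glr$ (the reduction to the pseudo-orthonormal subgroup induced by $g$ is extra data that we deliberately forget at this level). Forming the fibre product $GL(M) \times_M P^G$ gives a principal $\glr \times \glc$-bundle, and $T^*M \otimes \mathrm{End}(E^G)$ is precisely the vector bundle associated to it by the tensor product of the dual defining representation on $\real^D$ and the conjugation representation on $\mathrm{End}(\com^{d_\gamma})$; this yields the claimed decoupling. For the third item, since $E^G$ and $E$ coincide as vector bundles, the Dirac structure $\gamma \in \Gamma(T^*M \otimes \mathrm{End}(E))$ produced in the previous theorem is tautologically a section of $T^*M \otimes \mathrm{End}(E^G)$; the Clifford relation (\ref{eq:dirac_str}) is manifestly invariant under the enlarged $\glc$-conjugation, because that is exactly the algebraic invariance of the Clifford relation that motivated the spin-base formalism in the first place. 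I do not expect a serious obstacle: the only mild bookkeeping is the $\mathds{Z}_2$-quotient for $\spin^\com$, which is absorbed once and for all into the definition of $\rho$.
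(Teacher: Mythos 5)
Your proposal is correct and takes essentially the same approach as the paper: in both cases the content is an extension of structure group along the representation $\spin^\com(r,s)\rightarrow\glc$ (resp.\ $\spin(r,s)\hookrightarrow\glc$), after which the remaining items are formal. The only difference is presentational — you phrase the extension intrinsically via the associated-bundle construction $P^G=P\times_{\spin^\com(r,s)}\glc$ and the canonical isomorphism $E^G\cong E$, whereas the paper reasons with \v{C}ech transition functions and the cocycle condition; the paper also states the $\mathds{Z}_2$-descent implicitly by asserting $\spin^\com(r,s)\subset\glc$, a point you verify explicitly.
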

	
	\begin{proof}

	By choosing a fixed set of constant Dirac matrices, $\spin(r,s)$ has a corresponding unique  representation on  $\com^{d_\gamma} $. Thus: 
	\begin{equation} \spin(r,s)  \subset \glc , \qquad \spin(r,s)^\com  \subset \glc . \label{eq:spin_sub}\end{equation}
	(Different choices of those matrices correspond to a fixed  similarity transformation and potentially, in the odd dimensional case, an additional sign).

	We now show that we may extend the symmetry on the vector bundle $E$ from $\spin(r,s)$ or $\spin(r,s)^\com $ to  a $ \glc$ vector bundle $E^G$ (same set, but larger structure group) such that $\gamma$ is defined in  $T^*M \otimes \mathrm{End}(E^G )$. 
	
	The existence of $E^G$ is obvious, since the vector bundle may be defined by an open cover $\{V_\alpha\}$ of subsets of $M$ and transition functions $\Psi_{\alpha \beta}: V_\alpha \cap V_\beta \rightarrow \spin(r,s) $ (or  $\Psi_{\alpha \beta}: V_\alpha \cap V_\beta \rightarrow \spin(r,s)^\com $  in the $\spin^\com$ case) such that the cocycle condition is fulfilled: $\Psi_{\alpha \beta} \cdot \Psi_{ \beta \gamma} \cdot \Psi_{\gamma \alpha} =  \openone$ on $V_\alpha \cap V_\beta \cap V_\gamma$. However, due to Eq.~(\ref{eq:spin_sub}), those are valid transition functions for a $GL(d_\gamma,\com )$ bundle as well. (Essentially, one simply ``forgets" the spin structure and thus allows more general base transformations). $T^*M$ may be considered a $\glr$ vector bundle in complete analogy by admitting arbitrary frames instead of only orthonormal frames in the frame bundle.
	Now forming the bundle   $T^*M \otimes \mathrm{End}(E^G) $, this obviously transforms under  $\glr  \times  \glc$  by construction. 
	
	Since $E^R$ and $E$ are identical as sets and the identification is linear in the fibres, $\gamma$ is obviously well-defined as a section of $T^*M \otimes \mathrm{End}(E^G) $. 
	\end{proof}

	All this is essentially very basic, the only subtlety being the fact that in order to define $\gamma$ as a global section, one needs to transform space time index and spinor index at the same time with the same element of the $\spin$ or $\spin ^\com$ group as shown above. However, as soon as $\gamma$ is defined, we may consider more general transformations transforming the $T^*M$ part and the $\mathrm{End}(E)$ part separately, but consistently with the bundle structure. Obviously, with those more general transformations, the coordinate expression of $\gamma$ is no longer a set of constant Dirac matrices.

\section{Reduction of structure group for arbitrary globally defined Dirac structure  \label{sec:reduction}}

	In this section, we show the central result of this paper, namely that the existence of a Dirac structure implies a $\spin^\com$ structure:
	
	\begin{theorem} For any spin base invariant Dirac structure on an orientable pseudo-Riemannian manifold $(M,g)$ i.e. an arbitrary complex vector bundle $E$ over $M$ of dimension $d_\gamma$ with a global section $\gamma \in \Gamma(T^*M \otimes \mathrm{End}(E))$ such that the Clifford condition  (\ref{eq:dirac_str}) is fulfilled, there is a $\spin^\com(r,s)$- structure such that $\gamma$ is the Dirac structure corresponding to this $\spin^\com(r,s)$- structure according to theorem \ref{thm:ext}. In particular, a Dirac structure exists iff a $\spin^\com(r,s)$- structure exists, i.e. the same topological obstructions apply.
	\end{theorem}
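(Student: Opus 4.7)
The plan is to perform a reduction of structure groups in the spirit of gauge fixing: we single out, inside a large $GL$-frame bundle, those frames in which the given global Dirac structure takes a fixed standard constant form, and then identify the residual structure group with $\spin^\com(r,s)$. Concretely, fix a realization of the irreducible Clifford module by constant matrices $\gamma^0_a \in \mathrm{End}(\com^{d_\gamma})$ satisfying $\{\gamma^0_a,\gamma^0_b\}=-2\eta_{ab}\openone$, picking in odd dimensions the branch pointwise isomorphic to $(E_x,\gamma|_x)$ (which is forced on all of $M$ by connectedness). Consider the principal $\so(r,s)\times\glc$-bundle
\begin{equation}
Q \;\coloneqq\; \so(M)\times_M F_\com(E),
\end{equation}
obtained as the fibred product of the oriented pseudo-orthonormal frame bundle of $TM$ with the complex frame bundle of $E$, and let $P\subset Q$ consist of all pairs of frames $(f_x,e_x)$ in which the components of $\gamma$ coincide with the fixed $\gamma^0$.

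The second step is to show that $P$ is a principal subbundle of $Q$ whose structure group is the stabilizer $H$ of $\gamma^0$ under the combined action of $\so(r,s)\times\glc$. Non-emptiness of each fibre $P_x$ follows from the uniqueness (up to isomorphism) of the irreducible complex representation of $\cl(r,s)$: the pointwise data $(E_x,\gamma|_x)$ is exactly such a representation, hence admits frames bringing $\gamma$ into the standard form $\gamma^0$. Smoothness of the subbundle is then automatic since $H$ acts freely and transitively on each fibre and $\gamma$ depends smoothly on $x$; local sections are obtained by a standard normal-form argument. A short Clifford-algebraic computation, using that the inclusion $\spin(r,s)\subset\glc$ double-covers $\so(r,s)$ and that by Schur's lemma the centralizer of the image of $\cl(r,s)$ in $\mathrm{End}(\com^{d_\gamma})$ is $\com^\ast\openone$, identifies
\begin{equation}
H \;\cong\; \bigl(\spin(r,s)\times\com^\ast\bigr)/\mathds{Z}_2,
\end{equation}
where the quotient is by $(g,u)\sim(-g,-u)$. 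Explicitly the isomorphism is $(g,u)\mapsto(\Pi(g),\,g\cdot u\,\openone)$.

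The third step is the reduction from $H$ to $\spin^\com(r,s)=(\spin(r,s)\times\uone)/\mathds{Z}_2$ together with the verification of naturality. Since $\com^\ast\cong\uone\times\real_{>0}$ and $\real_{>0}$ is contractible, any principal $H$-bundle admits a canonical reduction to a principal $\spin^\com(r,s)$-bundle; equivalently, one picks a Hermitian metric on the $\com^\ast$-line bundle associated to $P$ via a partition of unity and passes to its unit circle subbundle. This produces the desired $\spin^\com(M)$ together with an auxiliary $\uone$-bundle $U_1(M)$, and the equivariant covering map $\Phi^\com:\spin^\com(M)\to\so(M)\times_M U_1(M)$ required by Eq.~(\ref{eq;spin_com_cov}) is induced by the projection of $P\subset Q$ onto its two factors. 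Naturality is then immediate: a local section of $P$ is, by construction, a frame in which $\gamma$ takes the standard form $\gamma^0$, and such a frame is precisely the datum used in the proof of Theorem~\ref{thm:ext} to extract the Dirac structure from a $\spin^\com$-structure; hence the Dirac structure reconstructed from this $\spin^\com(M)$ agrees with the original $\gamma$. The main technical hurdle will be the Clifford-algebraic identification of $H$ and the careful handling of the odd-dimensional sign, together with the smoothness verification of the $\com^\ast$-to-$\uone$ reduction; all of these rest on standard results recalled in the appendix.
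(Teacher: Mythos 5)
Your proof follows essentially the same route as the paper's: restrict the fibred product of the pseudo-orthonormal frame bundle $\so(M)$ and the complex frame bundle of $E$ to those frame pairs in which $\gamma$ assumes the fixed constant form, identify the residual structure group (the stabilizer of $\gamma^0$) as $(\spin(r,s)\times\com^\ast)/\mathds{Z}_2\cong\spin^\com(r,s)\times\real^+$ via the double cover and Schur's lemma, eliminate the $\real^+$ scale factor, construct $U_1(M)$ and $\Phi^\com$, and handle the odd-dimensional branch by connectedness. The only cosmetic difference is how the scale is removed: the paper passes canonically to the quotient $P^f/\real^+$, whereas you perform a (non-canonical, but unique up to isomorphism) reduction via a Hermitian metric using contractibility of $\real_{>0}$ — as the paper's own remark after the proof points out, these two viewpoints are equivalent and yield the same $\spin^\com$ structure.
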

	
	So in a more physical language, we may partially gauge fix the large gauge group $\glr \times  \glc $ to a smaller  group  $\spin^\com(r,s)$, reducing the whole structure to the $\spin^\com$ case.
	
	\begin{remark}
	 It is interesting to note that the less intuitive group  $\spin^\com(r,s)$ and not  $\spin(r,s)$ appears naturally for spin base invariant structures, so this may be considered as a more natural setting for deriving this $\spin^\com(r,s)$  structure instead of the normal approach of ``artificially'' adding a $\uone$ gauge symmetry to alleviate the obstruction towards existence of $\spin$ structures.
	\end{remark}
	
	\begin{proof}

	Let $\gamma$ be a Dirac structure as defined above.
	For two arbitrary bundles $B_1, B_2$ over $M$ we define $ B_1 \times_M B_2 $ as the pullback of the direct product $B_1 \times B_2$ via the diagonal map $d: M \rightarrow M \times M$, i.e. we only consider those points in the direct product which have the same base point in $M$, thus again forming a bundle over $M$ (and not over $M \times M$. This is the general bundle analog of the construction of Whitney sum bundles for vector bundles).
	
	For the proof, we first consider the case of even dimensional $M$:
	
	Considering $E$ as a general complex vector bundle, we consider the bundle 
	\begin{equation} P = L(M) \times_M L^\com (E),\end{equation}
	where $L(M)$ denotes linear frames over $M$, $L^\com (E)$ the complex linear frames of $E$. $P$ is a principal  bundle over $M$ with structure group $\glr \times  \glc $. It consists of all pairs of real linear frames on  $TM$ and of complex linear frames on $E$ over the same base point in $M$. 
	
	 Using orientability and the (pseudo)-Riemannian metric $g$ we may restrict the frames in $L(M)$ to positively oriented, orthonormal ones, leading to a principal bundle
	\begin{equation} \tilde{P} = SO(M) \times_M L^\com (E),\end{equation}
	with structure group $SO(r,s) \times \glc$.
	
	Any point $p \in \tilde{P}$ defines a base of $T_xM$ and a base of $E_x$ for $x=\pi(p) \in M$, and hence a map $p^*: T_x^*M \otimes \mathrm{End}(E_x) \rightarrow (\real^D)^* \otimes \com({d_\gamma}) $ where $\com(n)$ denotes the set of complex $n\times n$ matrices. 
	
	We choose a fixed ``reference'' set $\gamma^f_i \in \com(d_\gamma), i = 1,...,D $ of  $\gamma$ matrices fulfilling Eq.~(\ref{eq:flatdirac_def}) and define 
	\begin{equation}
        P^f \coloneqq \{p\in \tilde{P} | p^*(\gamma)(e_i) = \gamma^f_i \} , 
    \end{equation}
    where $e_i =(0,..,0,1,0,..)$ form the standard base of $\real ^D$, i.e. we restrict to those frames for which the Dirac structure $\gamma$ expressed in the bases defined by $p$ coincides with $\gamma^f_i$. (This is analogous to reducing the structure group of a Riemannian manifold to $O(N)$ by considering only frames which are orthogonal, i.e. where the metric $g$ expressed in the base is a standard metric $\textrm{diag}(1,..1,-1,...,-1)$).
	The choice of $\gamma^f_i $ defines in addition a representation of $\spin^\com(r,s) $ on $\com^{d_\gamma}$ and thus and  embedding of $\spin^\com $ as a subgroup of $\glc$. 
	
	We now  show that  a $\spin^\com$ structure may be constructed by taking the quotient $P^f / \real^+ $ where the multiplicative group $\real^+$ acts by scaling the frames in $E$ only. The reason for this quotient is the fact that the Dirac structure does not define a notion of scale on $E$ since an overall scale drops out in the transformation of $\gamma$. 
	
	First, we show that $P^f$ itself forms a  $\spin(r,s)^\com \times \real^+ $ principal bundle. 
	We identify $\spin(r,s)^\com $   with a subgroup of $SO(r,s) \times \glc$ by $g \mapsto (\tilde{g},g)$ where $\tilde{g}$ is the image of $g$  in $\so(r,s)$ in $ Eq.~(\ref{eq:def_spinc})$. It is important for the following that $\spin(r,s)^\com $ acts on both factors at the same time.
	
	Let $F_x = \pi^{-1}(x) \subset P^f$ be the fibre of $P^f$ over the base point $x$. We show that $\spin(r,s)^\com \times \real ^+$ acts via a free and transitive action  from the right on   $F_x$:
	\begin{itemize}
		\item \emph{${F}_x$ is not empty}:  This follows since for  a \mbox{(pseudo-)}orthornomal frame, Eq.~(\ref{eq_cliff}) reduces to Eq.~(\ref{eq_flatdirac_trafo}) and in even dimensions all representations of the Clifford algebra are equivalent by a similarity transformation.
	    \item \emph{$\spin(r,s)^\com \times \real^+ $ action is well-defined:} Since in $p^*(\gamma)(e_i)$ the index $i$ transforms as a co-vector under  a Lorentz transformation and the endomorphism part by conjugation  with $Spin(r,s)^\com $, it follows from the transformation properties  of the $\gamma$ matrices Eq.~(\ref{eq:inv_gamma}) that the image of $ p \cdot g $ of $p \in F_x$ under $g \in   \spin(r,s)^\com $ is again in $F_x$, hence the action of the subgroup is well-defined on $P^f$.
	    \item \emph{The action is transitive:}
  	      By construction, the action of $SO(r,s) \times \glc$ is transitive on the fibres of $\tilde{P}$. Hence, if $p,q \in F_x$ are two points in the same fibre $F_x$, there are $(g,h) \in   SO(r,s) \times \glc$ such that $q = p \cdot (g,h)$. We have to show that $(g,h)$ may be chosen such that $(g,h) \in \spin(r,s)^\com \times \real^+.$  Let $k \in  \spin^\com(r,s)$  be such that its corresponding Lorentz transformation component of $\rho(k)$ is $g$ under Eq.~(\ref{eq:def_spinc}). Then, $q \cdot k^{-1} \in F_x$ as well by the statement in the previous item, hence $ r \coloneqq p \cdot (g,h) \cdot  k^{-1} = p \cdot (g g^{-1} , h k^{-1}) = p  \cdot (e , h k^{-1}) \in F_x$ where $e$ denotes the identity in $\so(r,s)$.  Hence, for all $i $: 
	      \begin{equation} 
	         ~~~~ p^*(\gamma)(e_i) = \gamma^f_i  =  r^*(\gamma)(e_i) =  h k^{-1}\gamma^f_i(h k^{-1})^{-1},
	      \end{equation}
	      so the $\gamma^f_i $ are invariant under conjugation with $h k^{-1}$. Since the identity matrix  $\openone$  and products of the $\gamma^f_i $ span the whole set of complex matrices in $\com(d_\gamma)$,  $h k^{-1}$ must be in the center of $\glc$ and hence a multiple of $\openone$: $h k^{-1} = \lambda \openone $ with $ \lambda \in \com^* = \com \setminus \{0\}$. Hence, writing $\lambda = |\lambda| u$ with $u \in \uone $,  $(g,h) = k \cdot \lambda \openone = k \cdot  u  |\lambda|  \openone   $ maybe identified with an element of $ \spin(r,s)^\com \times \real^+ $ since $(-1,-1) \in \spin(r,s) \times \uone $ acts trivially on $P^f$.   
	          
	    \item \emph{Free action:} The fact that the action is free may be seen easily: If $ p \cdot g =p $ for $(g, \lambda)  \in   \spin(r,s)^\com \times \real^+ $ then obviously  $\lambda=1$ since it acts by scaling on the frame  in  $L^\com (E)$. Furthermore, the image of $g$ in $\glc$ must be the identity. This implies
	    	    that $g$ must be itself the identity, since the kernel of the map $\spin(r,s) \times  \uone  \rightarrow \glc$ is exactly $\mathds{Z}_2 = \{(1,1),(-1,-1)\}$ which is divided out in the definition of $\spin^C$. (This is a main reason to consider $\spin^\com$ instead of $\spin(r,s) \times  \uone$.)   
	\end{itemize}

    So $P^f$ indeed has the structure of a $\spin(r,s)^\com \times \real ^+$ principal bundle. We define
    \begin{equation} \label{eq:constr_spinc}
     \spin^\com(M)  \coloneqq P^f / \real^+ , 
    \end{equation}
    where $\spin^\com(M)$ for the moment is just a name for the quotient, but we will show that this is indeed a $\spin^\com$ bundle:
    The group   $\real ^+$ acts trivially on the quotient by construction. Going through the previous steps one easily checks that $ \spin(r,s)^\com $ acts freely and transitively on this quotient, so we have a  $ \spin(r,s)^\com $ principal bundle. The fact that this is indeed a smooth bundle follows from the fact that $P^f$ is a smooth bundle by its definition which allows smooth local trivializations by the inverse function theorem, and the fact that the group action of $\real^+$ on $P^f$ is free and smooth.

    To finalize the proof that this is indeed a $\spin^\com $ structure we have to define a $\uone$  bundle over $M$ such that there is an equivariant mapping $\Phi^\com$ as in Eq.~(\ref{eq;spin_com_cov}). The $U(1)$ bundle may be easily constructed as an associated bundle:
    \begin{equation} 
    U_1(M) \coloneqq  (\spin^\com(M) \times \uone)/ \sim , 
    \end{equation}
    where $(p,h) \sim  (p \cdot [(g,u)]^{-1}, u^2 h)$ for arbitrary $[(g,u)] \in \spin^\com $ and $(p,h) \in (\spin^\com(M) \times \uone)$. Since $\uone$ acts on $U_1(M)$ from the right, this is indeed a $\uone$-principal bundle and an equivariant  mapping from $\spin^\com(M)$ to $U_1(M)$ is defined by: $\Phi^\com_{\uone} (p) \coloneqq (p,e)/\sim$, where $e$ is the identity in $\uone$.
    
    Now, there is an obvious equivariant mapping from $\spin^\com(M)$ to $\so(M)$, just mapping $p \in \spin^\com(M)$ to the  $\so(M)$ component of one of its representatives in the quotient modulo $\real^+$. Since $\real^+$ only acts on the second $L^\com (E)$-component, this is well-defined independently of the choice of the representative. 
    
    Finally, combining both maps, we get a well-defined equivariant map $\Phi^\com: \spin^\com(M) \rightarrow \so(M) \times U_1(M)$, so $\spin^\com(M) $ indeed defines a $\spin^\com $ structure on $M$, and the proof is finished for the even dimensional case.

	The case of an odd dimensional manifold is only slightly more complex: The only additional complexity is the fact that there are two inequivalent representations of the Clifford algebra $\cl(r,s)$ which may be mapped to each other by $\gamma^f_i \mapsto -  \gamma ^f_i$. So in order to do the whole construction above, we have to choose the right set of constant Dirac matrices. However, assuming that $M$ is connected, this ``right set'' may not change with the point $x\in M$ chosen due to continuity of the Dirac structure $\gamma$. Hence, everything works out completely analogously.  
		\end{proof}

    \begin{remark} ~
    \begin{itemize}
        \item
        According to Eq.~(\ref{eq:constr_spinc}), $P^f$ may be considered an $\real^+$ principal bundle over  $\spin^\com(M)$. However, since $\real^+$ is contractible, any $\real^+$ principal bundle is trivial     \cite{hatcherAlgebraicTopology2002}, hence $P^f$ may be identified (in a non-canonical way) with $ \spin^\com(M) \times \real^+$. Hence, $ \spin^\com(M)$ may be embedded as a submanifold of  $P^f$ (For an elementary, but less general and less elegant  proof of triviality of $\real^+$ bundles not using arguments from algebraic topology, see appendix \ref{sec_elem_triv}) . However, this embedding is non-canonical essentially due to the lack of the notion of a scale on the bundle $E$, which would allow to gauge-fix the $\real^+$ action. A scale may be introduced by introducing a metric or ''pseudo-hermitian''  structure on $E$, i.e.  a non-degenerate but generally indefinite scalar product on the fibres. However, this additional structure is not needed for the proof.
        \item This lack of scale is the reason why we do not restrict the group to $\slc$ instead of $\glc$ as used e.g. in     Ref.~\onlinecite{giesFermionsGravityLocal2014}.
    \end{itemize}
    \end{remark}

\section{Spin metric and connection} \label{sec:metric_conn}
    In order to define a spin metric and a compatible connection, we make one additional assumption on the manifold $(M,g)$ for the non-Riemannian case ($ r\neq 0 $ and $s\neq0$): We assume that $M$ is space and time-orientable, i.e. there is a global orientation of ''space'' and (potentially multi-dimensional) ``time'' sub-frames separately, i.e., for any  pseudo-orthonormal  frame  $f=(v_1,\ldots,v_r, v_r+1,\ldots v_{r+s}) $ with $g(v_i,v_i) = +1 $ for $i=1,\ldots,r$ and $g(v_i,v_i) = -1 $ for $i=r+1,\ldots,r+s$, there is a consistent assignment of ``positive'' or ``negative'' orientation in $ (v_1,\ldots,v_r) $ and $( v_{r+1},\ldots v_{r+s})$ separately. Obviously, orientation and ``time'' orientation imply ``space'' orientation and vice versa. In the special case $r=1$ or $s=1$, this coincides with the standard notation of an orientable,  time orientable Lorentz manifold. 
    
    With this additional structure, we may reduce frames in $\so(M)$ to only those which are both positively ``time'' and ``space''-oriented. In this way, we reduce the structure group from $\so(r,s)$ to $\so_0(r,s)$, the identity component of $\so(r,s)$. This may now be transferred to the situation of a manifold with Dirac structure:
    By restricting in the construction of $P^f$ the space-time frames to only those which are both positively ``time'' and ``space''-oriented, yielding a $\spin^\com_0(r,s)\times \real^+$ bundle  $P^f_0$ where $\spin^\com_0(r,s)$ is the component of the identity of $\spin^\com(r,s)$.
    
    Now, by choosing a section in the $\real^+$ bundle, we get a $\spin^\com_0(r,s)$ principal bundle. Its associated vector bundle $E$ now inherits an --  up to a global scale --  unique metric by pulling the (up to a scale) unique metric on $V=\com^{d_\gamma}$  $<,>_{\scriptscriptstyle V}$ with 
     \begin{equation} \label{eq:cond_metric_main}
         \begin{split}
             <\psi, \psi>_{\scriptscriptstyle V} ^* & =  <\psi,\psi>_{\scriptscriptstyle V}^{~}, \\  \forall v \in V \subset \cl(V): ~ <\gamma(v) \phi,  \psi>_{\scriptscriptstyle V} & = (-1)^s <\psi,\gamma(v) \phi>_{\scriptscriptstyle V}  
        \end{split}
     \end{equation}  
    (see appendix \ref{sec:metr_com_vector}) by choosing for a vector $ X \in E$ a representative $(p,x)$ of the $\sim$ equivalence class $[(p,x)] = (p,x) / \sim$ in the $\spin^\com_0(r,s)$ analog of Eq.~(\ref{eq:assoc_vb_sp}) and setting $<X,X> \coloneqq <x,x>_{\scriptscriptstyle V}$. This is well-defined, since the metric $<,>_{\scriptscriptstyle V} $ is $\spin^\com_0(r,s)$ invariant. By the standard polarization trick, we may define from this $<\psi,\phi>$ for arbitrary sections $\psi,\phi$ in $E$, which  fulfills the obvious bundle analog of equation Eq.~(\ref{eq:cond_metric_main}). 
    
    \begin{remark} As soon as $<,>$ is defined $f\cdot <,>$ is a valid metric for any $f: M\rightarrow\real^+$. However, given $<,>$ we may use it to ``fix the scale'' (up to one global scale factor in  $  \real^+$ in the previous paragraph. In this sense,  choosing an embedding of $ \spin_0^\com(M)$ in $P^f_0$ corresponds to choosing a scale and thus fixing $f$. 
    \end{remark}

    Once $<,>$ is defined, we may extend the structure group as in section \ref{sec:ext_spin_dirac}, thus obtaining a metric on $E$ as vector bundle with structure group $\glc$. (Obviously, whenever more general local trivializations of $E$ are used, the local expression of $<,>$ is no longer $<,>_{\scriptscriptstyle V}$ but has to be transformed consistently with the trivialization).
    
    To define a compatible covariant derivative or equivalently, a compatible connection, we start with the Levi-Civita connection on $M$. 
    
    As is well known, we may identify a connection with a globally defined, equivariant Lie-algebra valued one-form on $\so_0(M)$. (In the following,  we will deliberately use the different, but equivalent views on connections on vector and principal bundles to simplify the derivation. They are  briefly summarized  in appendix \ref{sec:conn_principal}.)
    
    This one-form can be pulled back from $\so_0(M)$ with the double-cover projection to  $\spin^\com_0(M)$, defining a connection on the   $\spin^\com_0(r,s)$ principal bundle  $\spin^\com_0(M)$, and thus a covariant derivative on $E$. Since the defining  properties of a covariant derivative on a vector bundle do not relate to the structure group, but only to the vector space structure of each fibre (and the $\mathcal{F}(M)$-module structure of $E$), this defines a valid covariant derivative if we consider $E$ as a bundle with larger structure group $\glc$ as well. 
    
    Now, compatibility of this connection is obvious without calculation: 
    
    For any associated vector bundle, the covariant derivative  $\nabla_X s(x) $ of a section $s$  may be obtained by choosing a curve $\chi$ in $M$, starting at $x$ and tangent to $X(x)$, lifting it to a horizontal curve $\tilde{\chi}$  in the frame bundle  $\spin^\com_0(M)$ and then taking the ordinary derivative of the components of $s$ in the base defined by $\tilde{\chi}$. 
    
    Now, for the Dirac structure $\gamma$ the coordinate expression using $\tilde{\chi}$   of $\gamma$ when considered as a section in $T^*(M) \times \mathrm{End}(E)$ is constant as shown in sections \ref{sec:exist_dirac_spin} and \ref{sec:reduction} , thus $\nabla \gamma=0$.
    Similarly, if $<,>$ is considered a section of $E^*\otimes_M E^*$, its coordinate expression using  $\tilde{\chi}$ is constant by the construction above. Thus, we may conclude:
    \[  \nabla \gamma =0  , \text{ and } \nabla < , > =0, \] 
    where $\nabla$ acts as the natural tensorial extension on the respective bundles. (The second condition,  using the identification of $<,>$ with a  section of a bundle, might look a bit strange, a more intuitive representation is 
    \[\nabla <\psi,\phi> =  <\nabla \psi,\phi> + <\psi,\nabla \phi> \]  for all  $\phi,\psi$.)

    \begin{remark} Despite looking very similar,  the ``ontological'' status of the compatibility condition $\nabla \gamma =0$  in the local spin invariance setting is quite different form the vielbein postulate: As already stated in section \ref{sec:spin_inv_geom}, the ``vielbein postulate'' is just the condition $\nabla id =0$, where $id$ is the identity on the vector bundle $TM$. It is automatically trivially fulfilled  for any connection and only allows - as  a computational tool -  to translate expressions in holonomic, coordinate indices, to those in anholonomic vielbein indices. If we extend the structure group from $\so(r,s)$ to $\glr$, the corresponding vielbeins are no longer orthogonal, the vielbein postulate formulated as $\nabla id =0$ still trivially holds, but does not restrict the connection to be metric compatible. An expression $\nabla (e_i) =0$ for only orthogonal vielbeins  does not make any geometrical sense in this setting since the vielbeins are arbitrary, only locally defined sections in the frame bundle - the geometric meaningful condition being the standard Levi-Civita condition $\nabla g=0$.
    
    For the condition $\nabla \gamma=0$, a similar statement seems to hold at first sight, since it is automatically fulfilled in a $\spin^\com_0(r,s)$ setting. However, as soon as we extend the structure group to $\glc$, the crucial difference between this condition and the ``vielbein postulate'' becomes obvious: Since the Dirac structure is a global object, the condition $\nabla \gamma=0$ is perfectly well-defined in this setting and it automatically ensures that $\nabla g=0$ since $g$ may be expressed by $\gamma$. 
    \end{remark}
         
    Starting from the covariant derivative $\nabla$ on $E$, one can study more general connections and formulate suitable compatibility conditions, as done e.g. in local coordinates in   Ref.~\onlinecite{giesFermionsGravityLocal2014}. The difference between such a covariant derivative and $\nabla$ has been denominated ``spin torsion''.   We will not study this here  in more detail and only hint that this allows for additional degrees of freedom in the formulation of a theory of gravity.

\section{An example: The sphere $S^n$} \label{sec:sphere}

	In the paper by Lippoldt and Gies    (Ref.~\onlinecite{giesGlobalSurplusesSpinbase2015}) it is claimed that using suitable spin base transformations, a global trivialization may be achieved for $S^2$ despite the fact that $S^2$ is not parallelizable. However, since smoothness is only shown for $\gamma$ and the eigenvectors of the Dirac operator, it is not quite clear what is meant by this statement from a  geometric perspective, in particular since, as shown above,  the Dirac structure is always globally defined in case of a $\spin$ structure, which is known (as a $\spin(2,0)$ structure, not $\spin(1,1)$) to exist for $S^2$, and hence its pullback via any smooth coordinates and local sections is always smooth.
	
	We show here, as a generalization to arbitrary dimensions, that a stronger statement holds, namely the spinor bundle, when considered as a bundle over the larger spin base transformation group, is trivial for any $S^n$. (This does obviously not hold when only considering $\spin(n)$ transformations since otherwise the tangent bundle would be trivial as well in contradiction to the hairy ball theorem).

    First, we assume that the dimension of the sphere is even, since the argument is a bit more obvious in this case, and then extend to the odd dimensional case. We are considering the case of spin over a Riemannian manifold, since $S^{2n}$ does not carry any Lorentz metric.)

    We use the following facts:
    \begin{itemize}
    	\item Denoting by $N(S^{2n} )$ the normal bundle of $TS^{2n}$, the sum bundle $TS^{2n} \oplus N(S^{2n} )$ is the trivial $\mathds{R}^{2n+1}$ vector bundle  obtained by embedding $S^{2n}$ in  $\mathds{R}^{2n+1}$ and pulling back the trivial bundle $T\mathds{R}^{2n+1} = \mathds{R}^{2n+1}\oplus \mathds{R}^{2n+1}$ to $S^{2n}$.
    	\item The dimensions of the spinor bundles in $2n$ and $2n+1$ dimensions are the same.
    	\item 	$\spin(2n+1 ) \subset SU(d_\gamma)$.
    	\item $S^{2n}$ being simply connected, the $\spin$ structure  on $S^{2n}$ is unique up to isomorphisms.  	
    \end{itemize}	
    The first item is obvious from the definition of a normal bundle, the second is a well-known fact about Clifford algebras, the third item is well known for spinor representations over a Riemannian manifold (no spin metric including $\gamma_0$ needed). 

    We now consider the (necessarily trivial) $ \spin(2n+1)$ spinor bundle over $\mathds{R}^{2n+1}$  and denote its pullback to $S^{2n}$  via the inclusion map as $\tilde{E} = S^{2n} \times {\com^{d_\gamma}}$ and the corresponding trivial principal bundle $\tilde{P} = S^{2n} \times \spin(2n+1)$. 
    
    Denoting by $\Pi$ the double covering map: $\Pi: \spin(2n+1) \rightarrow \so(2n+1)$, we consider the subbundle $P\subset \tilde{P}$: 
    \begin{equation}
     P\coloneqq   \{(x,g) \in S^{2n} \times \spin(2n+1)|\forall X \in N_x(S^{2n}):  \Pi(g) X =X \} ~,
    \end{equation}
    i.e., we restrict the Spin transformations to those whose corresponding $SO(2n+1)$-projections under double covering leave the normal and hence the tangent spaces invariant.
    
    Now, by construction, the fibres of $P$ are double covers of the frame bundle of $S^{2n}$ and diffeomorphic to $\spin(2n)$, hence $P$ is the (unique) spin bundle over $S^{2n}$. Since $P$ is a sub-bundle of $\tilde{P}$ we may consider the spinor bundle $\tilde{E}$ as as $\spin(2n)$ bundle $E$ by reduction of structure group from $\spin(2n+1)$ to $\spin(2n)$ (so both are the same as vector bundles).   
    
    Since $\spin(2n+1)$ acts on $\tilde{E} \coloneqq E$, its action is a subset  of the spin base transformations. Hence, $E$ is non-trivial as a $\spin(2n)$ bundle, but allowing for the more general spin base transformations, it may actually be trivialized, and we are done for the even dimensional case.
    
    For the case of odd dimensional spheres $S^{2n+1}$, the only difference is  that the representations of $\spin(2n+1)$ and $\spin(2n+2)$ resulting from the respective Clifford algebra representations no longer have the same dimension, the dimension  for $\spin(2n+2)$ being twice that for $\spin(2n+1)$, and that there are two inequivalent Clifford representations in the odd dimensional case. However, for the representations of the $\spin$ groups, which only consist of products of even elements in the Clifford algebra, the even dimensional representation is well known to be reducible (to the eigenspaces of $\gamma_*$ for eigenvalues $\pm 1$). 
    
    Hence, we may restrict the bundle construction to one handedness in the chiral representation. Thus, the dimension of the representations do fit and we may use the same construction as for the even dimensional case. Since the spin structure over $S^{2n+1}$ is known to be unique, it follows that the whole construction does not depend on the choice involved.  \hfill   $\square$

\section{Conclusion and Outlook} \label{sec:conclusion}
    In this paper we have shown that a global differential-geometric formulation of the spin base formalism is not only possible, but beneficiary. It facilitates to answer global questions like possible obstructions and connections to other formalisms and  simplifies some proofs which are rather tedious in a local coordinate formulation. 
    
    The differential-geometric view on local spin base invariance shows that the Dirac structure is indeed a global object existing whenever spin may be defined. This is a strong hint that the Dirac structure may indeed be a better variable than vielbeins, which exist only as local sections of a frame bundle, but have no global meaning. Furthermore, it turns out that the argument ``fermions exist in nature'' is rather an argument in favor of Dirac structures, and not in favor of vielbeins as variables for a quantum gravity theory with fermions. (Though, obviously, the final decision must be made on the basis of experiments and not of aesthetics and mathematical simplicity and elegance.)
    
    The fact that from a Dirac structure, we can always construct a $\spin^\com$ structure implies that the same topological obstructions apply to both approaches. Nevertheless, the spin base approach is more general as it allows for additional degrees of freedom like spin torsion and extensions of standard general relativity. 
    
    Apart from considering more general theories, it might be worthwhile to study conventional approaches to quantizing gravity in those variables.  In particular, besides studying it from a path integral and functional renormalization group view, it might be worthwhile to study the structure of the constraints in a Hamiltonian/Dirac formalism approach using those variables and to check whether it may be beneficial to formulate approaches like loop quantum gravity and spin foams in this language.

\section*{Acknowledgment}
    I would like to thank Holger Gies  for valuable feedback on the manuscript.

\section*{Data Availability Statement}
    The data that support the findings of this study are available within the article.

\appendix
\section{Some properties of Clifford algebras\label{sec:app_prop_cliff}}   
    We collect here some well known facts about Clifford algebras (see e.g.   Refs.~\onlinecite{lawsonSpinGeometryPMS382016a,figueroa-ofarrillSpinGeometry2017,cornwellGroupTheoryPhysics1989}):

    \begin{itemize}
        \item For a real vector  space $V$ with possibly indefinite, but non-degenerate scalar product $<,>$ there is a naturally defined real Clifford algebra $\cl(V)$ which is obtained by taking the tensor algebra $\mathcal{T}(V) = \mathds{R}+  V + V \otimes V   + \ldots$  and dividing it by the ideal $\mathcal{I}$ generated by elements $ x \otimes x + <x,x> \cdot 1$. This construction is functorial and does not require the choice of any base of $V$. By construction, elements of $V$ may be naturally identified with a subset of $\cl(V) $  due to the fact that  there are no elements of order $1$ in the ideal $\mathcal{I}_x$. For those elements, the Clifford relation:  $ x \cdot x = - <x,x>$ or (by a polarization argument) $\{x,y\} = -2 <x,y>$ holds by construction. 
        \item The Clifford algebra is a $\mathds{Z}_2$-graded algebra and a $\mathds{Z}$-filtered algebra. As a vector space, it may be identified with $\Lambda(V)$, the exterior algebra. This is obviously not an algebra isomorphism.
        \item The Clifford algebras have been completely classified, and, depending on the signature $(r,s)$ of $<,>$, they are of the form $\real(n), \com(n), \mathds{H}(n), \real(n)\oplus\real(n) , \mathds{H}(n)\oplus\mathds{H}(n) $ for $n$ some power of $2$, where $\mathds{K}(n)$ is the space of $n\times n$ matrices over $\mathds{K}$, and $\mathds{H}$ denotes the quaternions. 
        \item The group $\pin(r,s)$ is defined as the product of arbitrary elements of $v_i\in V$  of ``length squared'' $\pm 1$ in $\cl(V): g= v_1\cdot .... \cdot v_k$ with $<v_i,v_i> = \pm 1$. 
        \item The double covering of $O(r,s)$ by $\pin(r,s)$ is defined by $g \mapsto L $ with 
            \begin{equation} \label{eq:def_cov}  \widetilde{\textrm{Ad}}_g(\hat{x}) =  \widehat{L x}, \end{equation}
            where $\hat{x}$ denotes the canonical image of $x\in V$ in $\cl(V)$, (in a local basis: $x^\mu \gamma_\mu$) and  $\widetilde{\textrm{Ad}}_g = \pm {\textrm{Ad}}_g$ depending on whether $g$ is an  even or odd product of elements of $V$. It can be shown that $\widetilde{\textrm{Ad}}_v$ for $v\in V$ with $<v,v>=\pm 1$ corresponds under this mapping to a reflection along the hypersurface vertical to $v$. Since all elements in
            $\textrm{O}(r,s)$ may be written as a finite product of such reflections, it follows that this mapping is indeed onto, has kernel ${\pm 1}$ and thus indeed defines a double cover.  
            
        \item $\spin(r,s)$ consists of all those elements of     $\pin(r,s)$ which are an even product of $v_i\in V$. By the same construction, it is a double cover of $\so(r,s)$.
        
        \item For $\spin(r,s)$ all elements are even by definition and hence  $\widetilde{\textrm{Ad}}_v =  {\textrm{Ad}}_v$ for $v \in \spin(r,s)$.
        
        \item Complex irreducible representations are of dimension $2^{d_\gamma}$ with $d_\gamma = \lfloor \dim(V)/2 \rfloor$. 
        
        \item  In the case of $\dim(V)$ even, there is  up to equivalence (realizable by adjoining with a $\glc$ element) only one irreducible representation of the Clifford algebra. As a representation of the Spin group it is reducible and decomposes into two irreducible representations, the Weyl representations,  corresponding to eigenvalues of $\pm 1$ of $\gamma^* \coloneqq i^p \gamma_1  \ldots \gamma_{\dim(V)}$  ($p$ is chosen depending on $r,s$ such that $\gamma^*$ is self-adjoint).
        
        \item  In the case of $\dim(V)$ odd, there are  up to equivalence  two such irreducible representations of the Clifford algebra. They can be transformed into each other by replacing $\gamma$ by $- \gamma$.  As a representation of the Spin-group both representations are irreducible.
    \end{itemize}

    Choosing a base $e_\mu$ of $V$  with  corresponding $\gamma_\mu \in \cl(V)$, we conclude from Eq.~(\ref{eq:def_cov}):
    \begin{equation}
        \begin{split}
           \textrm{Ad}_g ( x^\mu  \gamma_\mu)  & =   L^\nu_{~~\mu} x^\mu \gamma_\nu      \\
           \Rightarrow ~    x^\mu  ~  ( g \gamma_\mu g^{-1}) & =    x^\mu L^\nu_{~~\mu}  \gamma_\nu \\  
           \Rightarrow ~ g \gamma_\mu g^{-1} & =     L^\nu_{~~\mu}  \gamma_\nu , 
        \end{split}
    \end{equation}
    and hence
    \begin{equation}     
         (L^{-1})^\mu_{~~\rho} g \gamma_\mu g^{-1}  =     \gamma_\rho ~. \label{eq:inv_gamma}
     \end{equation}
     Note that under a Lorentz transformation a co-vector $x_\mu$ transforms as 
     $ x''_\mu = (L^{-1})^\mu_{~~\rho} x_\mu$, so Eq.~(\ref{eq:inv_gamma}) states that $\gamma$ is invariant under a $\spin(r,s)$ transformations if the space-time index is transformed as a co-vector under the corresponding Lorentz transformation and at the same time the ``endomorphism part'' by conjugation, as induced by the representation of $\spin(r,s)$ on a complex vector space.

     \section{Metric on $\com^{d_\gamma}$} \label{sec:metr_com_vector}
     
    We now show the existence and - up to a scale - uniqueness of a ``metric'', i.e. an non-degenerate, but possibly indefinite sesqui-linear form on $\com^{d_\gamma}$ with the property:
     \begin{equation} \label{eq:cond_metric}
        \begin{split}
             <\psi, \psi> ^* & =  <\psi,\psi>\\ 
             \forall v \in V \subset \cl(V): ~ <\gamma(v) \phi,  \psi> ^* &= (-1)^s <\psi,\gamma(v) \phi>  ~.
        \end{split}
     \end{equation}  
     
    From those properties, it follows in particular that the metric is invariant under $\spin_0(r,s)$, the identity component of $\spin(r,s)$:
    
    \begin{itemize}
        \item For the Riemannian case ($r=0$ or $s=0$), $\spin(r,0) \cong \spin(0,r)$ is simply connected. Since $ \spin_0(r,0)= \spin(r,0)$  may be represented  by even products of $v_i$  with $<v_i,v_i>= +1$, a possible  minus sign in Eq.~(\ref{eq:cond_metric}) cancels when computing $<g \phi, g\psi> $ for $g = v_1 \ldots v_{2k}$.
        
    \item For both $r,s\neq 0$, $\spin(r,s)$ has two components, which are mapped to each other by a combination of (in physics language) ``time reversal'' and ``parity''. (for both $r,s\leq2$ ``time'' here  has to be understood as generalized, multidimensional mathematical time,)  $\spin_0(r,s)$ may hence be identified with  products of $v_i$  with $<v_i,v_i>= \pm 1$, where both the number of $v_i$ with $<v_i,v_i>= + 1$ and the number of $v_i$ with $<v_i,v_i>= - 1$ are even. Hence, the minus signs cancel again in $<g \phi, g\psi> $, and the metric is invariant under $\spin_0(r,s)$ (but not under  $\spin(r,s) $).
    \end{itemize}

     Those properties are needed to define a physical theory with a real Lagrangian  (see Refs.~\onlinecite{giesFermionsGravityLocal2014,lippoldtSpinbaseInvarianceFermions2015}) If one considers the fermionic fields as anti-commuting Grassmann variables as needed e.g. for a path integral approach, the additional minus sign in the reality condition may be accommodated for by a factor $i$ in the metric. For the purpose of this paper, we consider the fermionic fields simply as sections in a vector bundle without additional algebraic properties.

     \textit{\textbf{Existence:}} We use the fact that there is an explicit representation of the Dirac matrices in arbitrary dimension as a tensor product of Pauli matrics $\sigma_i$, see e.g.  Ref.~\onlinecite{cornwellGroupTheoryPhysics1989}. We start with such a representation for the Riemannian case. Here, all Dirac matrices are hermitian matrices in this representation. (Since Ref.~\onlinecite{cornwellGroupTheoryPhysics1989} deals with the Lorentzian case, we have to replace $\gamma_0$ by $\gamma_D\coloneqq i \gamma_0$, which is hermitian as well, to get the Riemannian representation). Hence, for the Riemannian case we simply set
    \[ <\psi,\phi> \coloneqq  \psi^\dagger \cdot \phi, \]
    which obviously fulfills Eq.~(\ref{eq:cond_metric}).

    We denote the chosen  Riemannian $\gamma$ matrices corresponding to an orthonormal  base by $\gamma^R_i$.
    Then, for the general case of signature $(r,s)$ we set 
    \[ 
    \gamma_i \coloneqq \left\{ 
            \begin{array}{ll} \gamma^R_i & \text{, for } i=1,\ldots,r \\ i \gamma^R_i  & \text{, for } i=r+1,\ldots,r+s
            \end{array} \right. \quad.
    \] 
    Then, the $\gamma_i$ obviously fulfill the Clifford relation in an orthonormal base of signature $(r,s)$. We set
    \begin{equation} < \label{eq:def_metric}
        \psi,\phi> \coloneqq  i^c \; \psi^\dagger   \cdot \gamma_{r+1} \ldots \gamma_{r+s} \cdot  \phi, 
    \end{equation}
    then an easy calculation shows that $c \in \mathds{N}$ can always be chosen such that Eq.~(\ref{eq:cond_metric}) holds for all ${ v=e_i, i=1,\ldots \dim(V)}$ and hence for all ${v \in V}$ (Here, it is important that for all $\gamma_i$ the same sign occurs.)
    
    \begin{remark}
    $h \coloneqq \gamma_{r+1} \ldots \gamma_{r+s}$ is the spin metric in the sense of  Refs.~\onlinecite{giesFermionsGravityLocal2014,lippoldtSpinbaseInvarianceFermions2015} for this special case. The need to have a spin metric is closely connected to the occurrence of anti-hermitian $\gamma$ matrices or, in a Lie-group view, to the non-compactness of $\so(r,s)$ for both $r,s \neq 0$, which prevents the existence of finite dimensional unitary representations. 
    \end{remark}
    
    \textit{\textbf{Uniqueness up to scale:}} Since the metric Eq.~(\ref{eq:def_metric}) is non-degenerate, any other metric must be of the form $<\psi,\phi>_{alt} = <\psi, m \cdot \phi>$ for some $m \in \glc$. From Eq.~(\ref{eq:cond_metric}) we conclude: $\gamma(v)  m = m \gamma(v) $ for all $v\in V$, hence $g m = m g $ for all $ g \in \glc$, so $m$ must be in the center of $\glc$,  i.e. a multiple of $\openone$.

 \section{Dirac structure and representations of Clifford bundle} \label{sec:dirac_cliff_bundle}
    For any pseudo-Riemannian manifold $(M,g)$ with signature $(r,s)$ there is a Clifford bundle $\cl(M)$ which is intuitively defined by taking this construction for each tangent space ${(T_qM,<,>_q)}$ for  all $q\in M$, where $<,>_q\coloneqq g_q(\cdot,\cdot)$. It can be checked that this indeed forms a smooth bundle over $M$ with typical fibre $\cl(r,s)$ which, as a vector bundle, but not as a bundle of algebras, is isomorphic to $\Lambda(M)$, the bundle of exterior forms over $M$. In particular, the Clifford bundle exists over any pseudo-Riemannian manifold, there is no specific obstruction (though, for clarity,  there are obstructions for an arbitrary manifold to carry a Lorentzian metric as opposed to a Riemannian metric, which always exists on a paracompact smooth  manifold.)

	We show that the existence of a Dirac structure is equivalent to the existence of a (fibre-wise) irreducible representation of the Clifford bundle on a complex vector bundle $E$:
	
	So first we assume the existence of such a representation:

	Since the Clifford bundle fibre over a point $x \in M$ is $\cl(T_xM)$, and $T_xM $ may be canonically identified with a subset  $ T_xM \subset \cl(T_xM)$ the representation of the Clifford bundle defines  a linear map from $T_xM$ to the endomorphism of $E$. Being a Clifford representation, it fulfills Eq.~(\ref{eq:dirac_str}). 
	
	Conversely, given a Dirac structure,  this defines a map from $T_xM$ to the endomorphisms of $E_x$. Again, using  $ T_xM \subset \cl(T_xM)$, the fact that $\cl(T_xM)$is generated by the product of such elements and the property Eq. (\ref{eq:dirac_str}), this map may be uniquely extended to a Clifford map from $ \cl(T_xM)$  to $\mathrm{End}(E_x)$, i.e. a representation of the Clifford algebra  on $E$ (in more categorial mathematical terms this follows from the universality property of Clifford algebras).

\section{Connections, covariant derivatives and principal bundles}	\label{sec:conn_principal}
    In this section, we give a short summary of some elementary constructions with principal bundles for easy reference for readers with little background in principal bundles and their associated vector bundles bundles. The exposition is informal without proofs, giving only ideas. Details may be found in   Refs.~\onlinecite{kobayashiFoundationsDifferentialGeometry2009, steenrodTopologyFibreBundles1999, nakaharaGeometryTopologyPhysics2018, eguchiGravitationGaugeTheories1980}.
    
    A principal bundle $P$ is a smooth fibre bundle over a base manifold $M$  whose fibre is a Lie group $G$ which acts by a free and transitive group action from the right. (Smoothness can be relaxed, but will be assumed throughout this article). Intuitively, this means we are attaching a copy of the group $G$ to each point in $M$ while ``forgetting where the identity lies'', i.e. there is no global section is this bundle. (It may be easily seen from the definition that a global section exists iff $P = M \times G$, the trivial bundle). 
    
    Given a vector bundle $E\rightarrow M$, a typical example for a principal bundle is the bundle of frames $F(E)$, where the fibre over $x$ consists of the set of all frames, i.e. of bases of the fibre $E_x$. This is a principal bundle with group $\mathrm{GL}(d,\mathds{R} )$ or $ \mathrm{GL}(d,\mathds{C} )$, depending on whether the vector bundle is real or complex.  If there is an additional structure like a metric on $E$, the set of frames may be restricted to orthogonal frames, yielding a principal bundle of orthogonal/unitary  frames with structure group $\mathrm{O}(d,\mathds{R} )$ or $ \mathrm{U}(d,\mathds{C} )$. If there is an orientation defined on the fibres of a real vector bundle  $E$, the structure group may be further reduced to $\mathrm{SO}(d,\mathds{R} )$  by only admitting positively oriented frames in the real. (Sligtly different construction for $SU(n)$ in the complex case.) 
    
    From the frame bundle, one may recover the original vector bundle by constructing the adjoint vector bundle in the real case as 
    \begin{equation} 
        \label{eq:assoc_vb} E \coloneqq ( P \times \real^d ) / \sim,
    \end{equation}
    where $(p g^{-1},g v) \sim (p,v)$ for all $g \in \mathrm{GL}(d,\mathds{R} )$. Intuitively, this means we can define a vector in $E$ by choosing a base and the components of the vector in this base, but we have to identify those pairs of base and coordinates, where the coordinates are transformed according to the base change.
    
    For an arbitrary principal bundle $P$ with structure group $G$, a vector space $V$ and a representation $\rho$ of $G$ on $V$, this construction generalizes in an obvious way to:
    \begin{equation} 
        \label{eq:gen_assoc_vb} E \coloneqq ( P \times V) / \sim, 
    \end{equation}
     with    $(p g^{-1},\rho(g) v) \sim (p,v)$  for all  $g \in G$.
    
    As is well known, a covariant derivative $\nabla$ on a vector bundle is a map 
    \begin{equation} \label{eq:cov_der}
        \nabla:     \Gamma(T M) \times \Gamma(E)    \rightarrow \Gamma(E), ~ (X,\Psi) \mapsto \nabla_X \Psi ,\end{equation}
    such that 
    \[ \begin{split}
    \nabla_X ( \Psi + f \Phi) &= \nabla_X \Psi  +  f \; \nabla_X \Phi + (X \cdot f) \;\Phi \\ \nabla_{(X+Y)} \Psi &= \nabla_X \Psi  + \nabla_Y \Psi,
    \end{split}\]
    for any vector fields $X,Y \in \Gamma(T M)$, sections $\Psi, \Phi$  of the vector bundle and function $f \in \mathcal{F}(M)$.
    
    Less used in most of physics literature, but well known to mathematicians, is the fact that this is essentially equivalent to two geometric constructions on principal  bundles, in particular the frame bundles:
    \begin{itemize}
        \item An invariant horizontal distribution over $P$, i.e. a smooth assignment of horizontal spaces $H_p \subset T_pP$ to all $p\in P$ such that 
        \[T_pP = H_p \oplus V_p ,  \qquad (R_g)_*(H_p) = H_{g p}.\]
        Here, $V_p$ is the vertical space, i.e. the tangent space to the fibres, and $(R_g)_*$ means the push-forward of a vector by the right multiplication $R_g$  by an element $g\in G$. Note that $V_p$ is spanned by the fundamental vector fields $\xi_P(p)$ corresponding to infinitesimal transformations  on $P$ of the Lie algebra element $\xi \in \mathfrak{G}$ of the Lie algebra corresponding to $G$.
        \item A connection form $\omega$ on $P$, i.e. a globally(!) defined one-form on $P$ with values in the Lie algebra $\mathfrak{G}$ such that
        \[ \omega(\xi_P(p)) =\xi \text{ for }  \xi \text{ in } \mathfrak{G},\qquad  R_g^*(\omega) = \mathrm{Ad}_{g^{-1}}\omega,\]
        where $\mathrm{Ad}$ is the adjoint representation of   $G$ on $\mathfrak{G}$, $R_g^*$ denotes  the pull-back of a one-form by the right multiplication $R_g$  by an element $g\in G$, and $\xi_P$ denotes the fundamental vector field generated by $\xi \in \mathfrak{G}$ as above.
    \end{itemize}
    
    The connection form $\omega$ is closely related, but not identical to the local connection form $\Gamma$ on $M$ found in most physics literature: The latter is the pullback to $M$ via a local section of $P$, hence its coordinate expression in a local trivialization. This explains why $\omega$ is a global object transforming tensorially, whereas $\Gamma$ is only locally defined  and has the ``weird'' transformation behaviour of connection forms under change of local trivializing sections.
    
    Both constructions are equivalent and uniquely define a covariant derivative on an associated vector bundle. A proof may be found in  Ref.~\onlinecite{kobayashiFoundationsDifferentialGeometry2009}. Essentially, the logic is as follows: Given a horizontal distribution, we may uniquely decompose any vector  $X\in T_pP$ into its horizontal and vertical part. Now, $\omega(X_p)$ is simply defined as the element $\xi \in  \mathfrak{G}$ such that its fundamental vector field $\xi_P(p)$ is just the vertical part of $X_p$. It can be checked that $\omega$ defined in this way is indeed a connection form.
    
    Conversely, given $\omega$, we simply define $H_p \coloneqq \ker(\omega(p))$, i.e. the horizontal vectors are precisely those $X_p \in T_pP$ which are mapped to $0$ by $\omega$.
    
    Finally, a covariant derivative on an associated $E$ may be defined from one of those two constructions, e.g. in a ``physicist's fashion'' by choosing a local section in $P$, pulling back $\omega$ to $M$ and using the standard formulas for a covariant derivative in terms of an ordinary derivative and a local connection form and showing that this fulfills the definition of a covariant derivative and is independent of the choice of section. Alternatively, the following more geometric approach may be used:
    
    Let $X$ be a vector field and $\Psi$ be a section in the vector bundle $E$, $\chi(t)$ an integral curve of $X$ in $M$ tangent to $X$, i.e. a curve with tangent $\frac{d}{dt} \chi(t) = X(\chi(t))$. Then, using the horizontal distribution, we may define for any $p\in P$ which projects onto the curve origin $\chi(0)$ a unique horizontal lift $\tilde{\chi}(t)$  for $t\in [0,\epsilon]$ for some $\epsilon > 0$  such that the tangent vector is horizontal (i.e. in $H_{\tilde{\chi}(t)}$) for all $t\in [0,\epsilon]$. 
    
    Now, $\tilde{\chi}$ picks a specific representative of the equivalence class of  $\sim$ in Eq.~(\ref{eq:gen_assoc_vb}), hence $\Psi|_\chi$ may be identified with a function $\psi: t \mapsto V$. Now, to define the covariant derivative at the point $\chi(0)$, we simply take the ordinary derivative in $V$ and the equivalence class module $\sim$ of  $(\tilde{\chi}(0), \frac{d}{dt} \psi(0) )$.
    
    (Intuitively, what we have done is choosing a base which is covariantly constant along the curve $\chi$, decomposing $\Psi$ in this base where the coefficients now are just functions, and using Eq.~(\ref{eq:cov_der})  to express $\nabla_X \Psi$ by an ordinary derivative in this adjusted base.)
    
    Hence, for an arbitrary principal bundle, a connection in one of the two constructions uniquely defines a covariant derivative on any associated vector bundle. In case the principal bundle is a frame bundle of $E$, the converse holds as well: We may construct a horizontal distribution from $\nabla$ defining a connection on the principal bundle. This can be easily seen by considering for $p\in P$ a curve $\chi$ in $M$ with starting point $x = \pi(p) \in M$. Then, we may consider the horizontal lift $\tilde{\gamma}$, a curve in $P$ defined by parallel transporting all basis vectors of the frame  according to $\nabla$    along $\gamma$, thus defining a frame over each point of $\gamma$, and its tangent vector at $p$ (corresponding to $t=0$) is by definition horizontal. By considering a whole set of curves $\chi$ such that its tangent span $T_xM$, we may lift the whole $T_xM$ to a horizontal subspace $H_p \subset T_pP$. This fulfills the requirement of a connection, hence covariant derivative and connection on the frame bundle are equivalent. (This also works if the structure group of the frame bundle is restricted to some subgroup of $\mathrm{GL(d,\real})$.
    
    We will use this general  equivalence in section \ref{sec:metric_conn} to show the existence of a compatible connection in a very straightforward way.

\section{Elementary proof of triviality of $\real^+$ bundles} \label{sec_elem_triv}
    We consider an $\real^+$ principal bundle $\pi: L \rightarrow M$ over a paracompact manifold $M$, where $\real^+$ is a group under multiplication. Since $\real^+$ is contractible, it is a well known result from algebraic topology that such a bundle is always trivial (see e.g.   Ref.~\onlinecite{hatcherAlgebraicTopology2002}). However, for easy reference we give here an elementary proof using a partition of unity argument without using arguments from algebraic topology:
    
    First we note that by the group isomorphism $\real \rightarrow \real^+, t \mapsto e^t $ of the additive group $\real$ and the multiplicative group $\real$, we may equivalently consider $L$  a principal bundle over the additive group $\real$. (Despite the additive structure, this is not a priori a one dimensional vector bundle, but a principal bundle, since the transition functions act via translations, i.e. there is no preferred zero-section).
    
    Let $(V_i)_{i\in I}$ be a locally finite covering of $M$ and $\chi_i$ be a partition of unity, i.e. a set of non-negative maps $\chi_i: M \rightarrow [0,1]$ such that the support of $\chi_i$ is contained in $V_i$ and 
    \begin{equation} \label{eq:part_un}
        \sum _i \chi_i=1 ~.
    \end{equation}
    By refinement of $V_i$ (if necessary) we may assume that $L$ is locally trivial over each $V_i$. Hence, we may choose for each $i$ a local section $s_i$ in $L$. 
    
    Despite the fact that $L$ is  not a vector bundle, we may define a global section $s \coloneqq \sum \chi_i s^i $ due to the fact that $\sum _i \chi_i=1$:
    This is defined on a neighbourhood $U = V_i$ for a fixed $i \in I$ as follows:
    \begin{equation} \label{eq:def_sect}
    s^U(x) \coloneqq \sum \chi_i(x) (s^i)^U(x) , 
    \end{equation}
    where $s^U , (s^i)^U: U \rightarrow \real$ denote the real valued functions corresponding to the sections $s, s^i$ under the chosen local trivialization  over $U$. 
    This indeed defines a global section $s \in \Gamma(L)$ since, if $W = V_k$ is another neighbourhood with non-empty intersection with $U$ and $\alpha_{UW}: U\cap W \rightarrow \real$ the corresponding transition function, then:
    \begin{align*}
        s^W(x) & \coloneqq \sum \chi_i(x) (s^i)^W(x) = \sum \chi_i(x) \left( (s^i)^U(x) + \alpha_{UW}(x)\right) \\ 
        & = \sum \chi_i(x)  (s^i)^U(x) + \sum \chi_i(x)  \alpha_{UW}(x) 
        \\ &\overset{\mathrm{(\ref{eq:part_un}), (\ref{eq:def_sect}})}{=}
    s^U(x) + \alpha_{UW}(x) ,
    \end{align*}
    so $s$ transforms as a well-defined section under the transition functions.

\bibliography{bibliography}

\end{document}